\newtheorem{theorem}{Theorem}
\newtheorem{lemma}{Lemma}
\newtheorem{remark}{Remark}
\newcommand{\xc}[1]{{\color{blue} xuchen: ``#1''}}
\newcommand{\zhangx}[1]{{\color{green} zhangxiao: ``#1''}}
  \providecommand\BibTeX{{%
    \normalfont B\kern-0.5em{\scshape i\kern-0.25em b}\kern-0.8em\TeX}}}
\DeclareMathOperator*{\argmin}{arg\,min}
\DeclareMathOperator*{\argmax}{arg\,max}
\begin{document}
\begin{sloppy}

\title{EMGE: Efficient Dense Retrieval Recommender Model for Ensuring Minimum Group Exposure}
\title{FairSync: Ensuring Amortized Max-min Group Fairness in Modern Retrieval Recommender Models}
\title{FairSync: Controllable Retrieval for Ensuring Minimum Group Exposure in Recommendation}
\title{FairSync: Ensuring Amortized Group Exposure in \\Distributed Recommendation Retrieval}

\author{Chen Xu}

%\authornotemark[1]
\affiliation{%
  \institution{Gaoling School of Artificial Intelligence}
    \country{Renmin University of China}
  \\xc\_chen@ruc.edu.cn
}

% \footnote{* corresponding author.}
\author{Jun Xu}
\authornote{Corresponding author. Work partially done at Engineering Research Center of Next-Generation Intelligent Search and Recommendation, Ministry of
Education.}
\affiliation{%
  \institution{Gaoling School of Artificial Intelligence}
    \country{Renmin University of China}
  \\junxu@ruc.edu.cn
}

\author{Yiming Ding}
\affiliation{%
 \institution{Gaoling School of Artificial Intelligence}
    \country{Renmin University of China}
  \\	dingym97@ruc.edu.cn
}

\author{Xiao Zhang}
\affiliation{%
  \institution{Gaoling School of Artificial Intelligence}
   \country{Renmin University of China}
  \\zhangx89@ruc.edu.cn
}

\author{Qi Qi}
\affiliation{%
  \institution{Gaoling School of Artificial Intelligence}
    \country{Renmin University of China}
  \\qi.qi@ruc.edu.cn
}

\renewcommand{\shortauthors}{Chen Xu et al.} 

\begin{abstract}

In pursuit of fairness and balanced development, recommender systems (RS) often prioritize group fairness, ensuring that specific groups maintain a minimum level of exposure over a given period. 
For example, RS platforms aim to ensure adequate exposure for new providers or specific categories of items according to their needs. 
Modern industry RS usually adopts a two-stage pipeline: stage-1 (retrieval stage) retrieves hundreds of candidates from millions of items distributed across various servers, and stage-2 (ranking stage) focuses on presenting a small-size but accurate selection from items chosen in stage-1.
Existing efforts for ensuring amortized group exposures focus on stage-2, however, stage-1 is also critical for the task. 
Without a high-quality set of candidates, the stage-2 ranker cannot ensure the required exposure of groups.
Previous fairness-aware works designed for stage-2 typically require accessing and traversing all items. In stage-1, however, millions of items are distributively stored in servers, making it infeasible to traverse all of them. How to ensure group exposures in the distributed retrieval process is a challenging question.
To address this issue, we introduce a model named FairSync, which transforms the problem into a constrained distributed optimization problem. Specifically,
FairSync resolves the issue by moving it to the dual space, where a central node aggregates historical fairness data into a vector and distributes it to all servers. 
To trade off the efficiency and accuracy, the gradient descent technique is used to periodically update the parameter of the dual vector. 
The experiment results on two public recommender retrieval datasets showcased that FairSync outperformed all the baselines, achieving the desired minimum level of exposures while maintaining a high level of retrieval accuracy.

%Our empirical experiments verify that our method can converge efficiently under fairness requirements while achieving better retrieval accuracy under various conditions even under extreme cases.

\end{abstract}

\ccsdesc[500]{Information systems~Information retrieval}

\keywords{Distributed Retrieval, Recommender System, Minimum Exposures}

\maketitle
 
\section{Introduction}\label{sec:intro}

In recent times, the need for fair recommender systems (RS) has gained prominence in industrial requirements~\cite{tsai2021effects, li2022fairness}. 
Among these requirements, RS platform has the demands of guaranteeing that specific groups achieve a minimum level of exposure to items within a defined period, aligning with the perspective of amortized group max-min fairness (MMF)~\cite{matsumoto2016culture, lamont2017distributive, xu2023p, yang2022effective, ben2023learning}. For instance, certain studies propose to ensure minimum item exposures for new providers~\cite{xu2023ltp, xu2023p, fairrec, fairrecplus} for attracting providers to join, while others focus on enhancing the visibility of specific item categories~\cite{zhu2020measuring} for promoting certain festivals. Such ``minimum wage policy''~\cite{suryahadi2003minimum} significantly contributes to the enhancement of RS, fostering the creation of a more equitable and robust ecosystem.

\begin{figure}[t]
    %\label{fig:intro_img}
    \centering
     \subfigure[Recommendation workflow]
    {
        \includegraphics[width=0.45\linewidth]{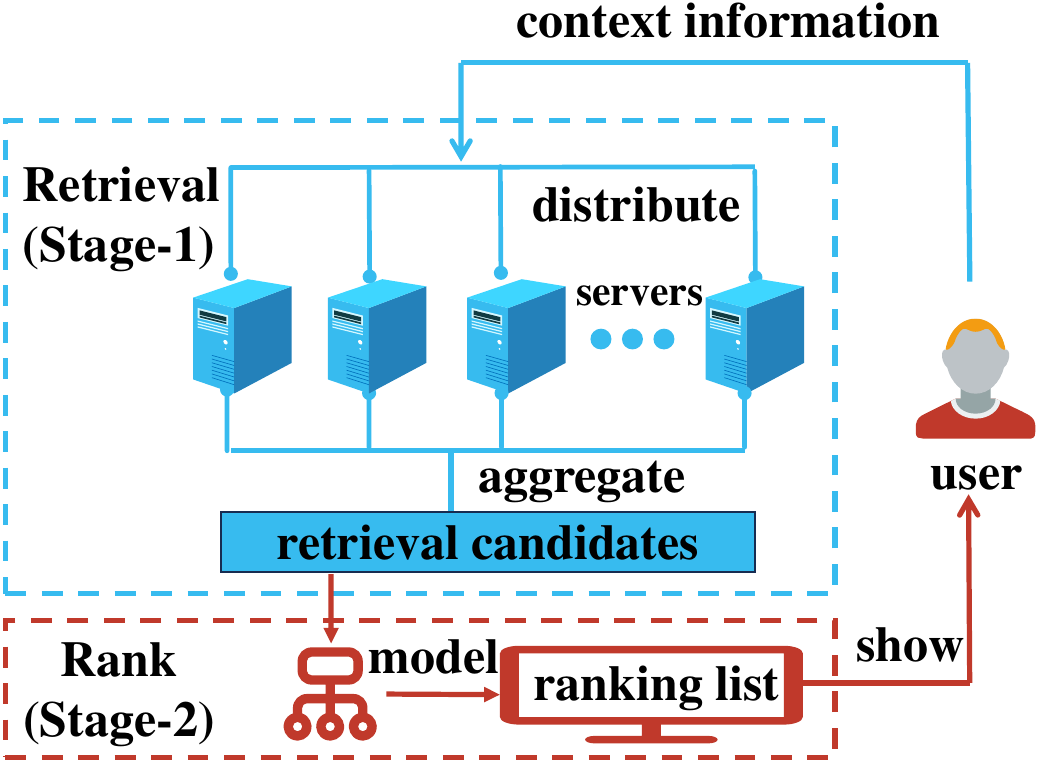}
    }
    \subfigure[Two stage performances]
    {
        \includegraphics[width=0.45\linewidth]{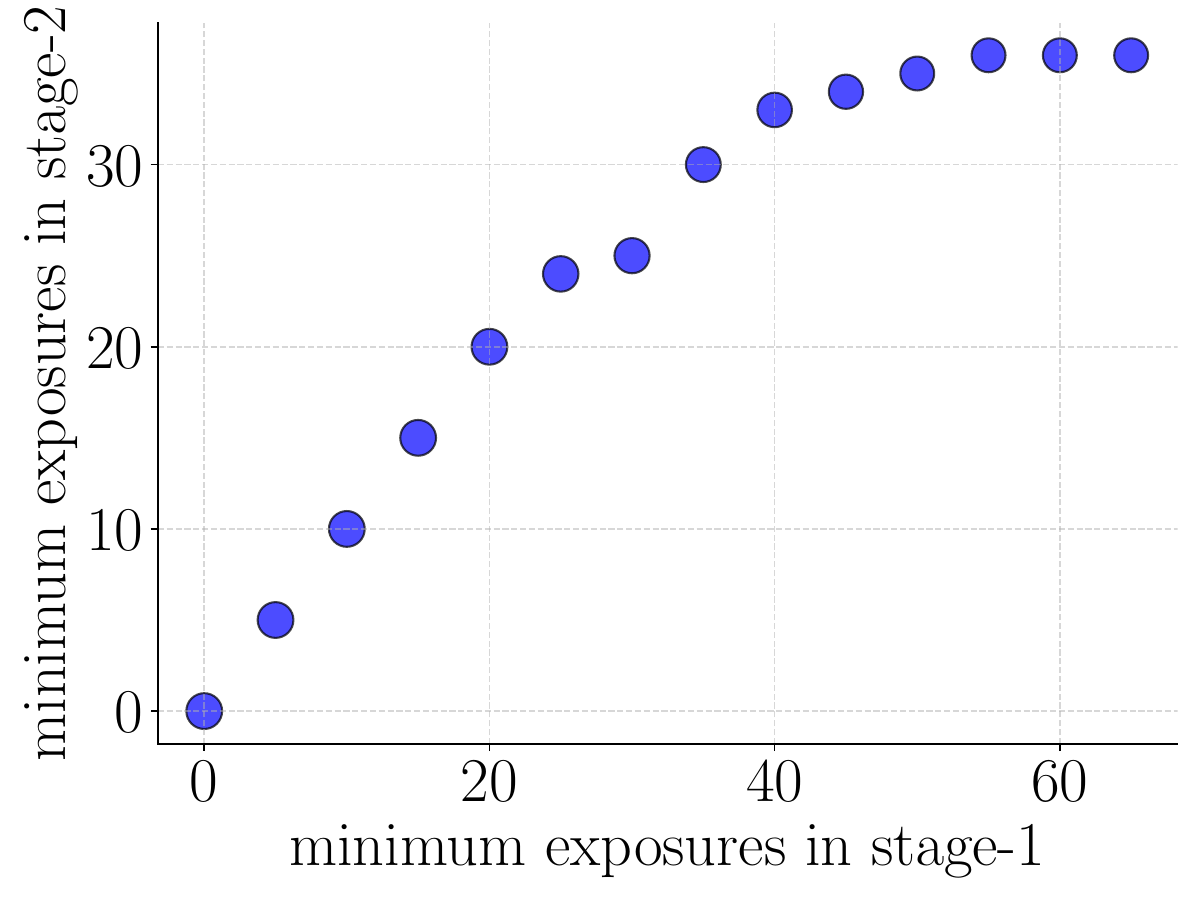}
    }
    \caption{(a) The two-stage pipelines of recommender system, including retrieval (stage-1) and ranking (stage-2). (b) Simulations depicting the changes for the minimum exposures across two stages. }
    \label{fig:intro_img}
    %\vspace{-0.3cm}
\end{figure}

In modern RS, two-stage pipelines have been widely adopted, as shown in Figure~\ref{fig:intro_img} (a). The primary objective of stage-1 (retrieval) is to efficiently generate a small set of candidates from millions of items in a distributed manner within milliseconds~\cite{ma2020off, youtubeDNN, Comirec, mind} while stage-2 (ranking) more accurately deals with the candidates selected in stage-1 and generates the final recommendations (usually single-digit items)~\cite{xu2023p, zhu2020measuring}. Regarding ensuring minimum group exposures in RS, most existing studies~\cite{fairrec, xu2023p, fairrecplus, cpfair, wu2021tfrom, nips21welf, zhu2020measuring} primarily concentrated on stage-2.

Existing studies revealed that the fairness task in stage-2 can be compromised if stage-1 fails to retrieve a sufficient number of items~\cite{wang2023uncertainty}. We also conduct a simulation to examine how the minimum exposure in stage-1 affects the minimum exposure in stage-2.
Specifically, we conduct a simulation using Amazon~\footnote{\url{http://jmcauley.ucsd.edu/data/amazon/}} dataset to assess the minimum exposure of groups across two stages. In stage-1, we leverage the YoutubeDNN~\cite{youtubeDNN} model and employ a rule-based method to regulate the retrieved exposures of item categories.
In stage-2, we implement an oracle ranking model, ensuring the attainment of the highest minimum exposure of groups. The x-axis represents the minimum exposures of item categories in stage-1, while the y-axis corresponds to the exposures of these categories in stage-2. The results reported in Figure~\ref{fig:intro_img} (b) indicate that there exists a robust positive correlation between the minimum group exposure of the two stages. In simpler terms, if stage-1 is unable to retrieve required item categories effectively, it will also pose challenges for stage-2 in ensuring the exposure of certain groups.

%While certain inspiring works~\cite{wang2023uncertainty, rastegarpanah2019fighting} have the ability to integrate fairness measures into the stage-1 of RS, they still lack the capacity to effectively tackle amortized group max-min fairness.
%Previous efforts in stage-2, as seen in references~\cite{fairrec, xu2023p, cpfair, wu2021tfrom, nips21welf}, aimed to ensure amortized group exposures by traversing all items and adjusting exposures through aggregating information. 

Though critically important, existing approaches designed for stage-2~\cite{fairrec, xu2023p, cpfair, wu2021tfrom, nips21welf} cannot be directly applied to stage-1 because they ensure the amortized group exposures by traversing all items and adjusting exposures through aggregating information. During stage-1, however, traversing all items (usually millions of items) is infeasible because (1) these items are distributively stored at different servers, and (2) accessing millions of items causes substantial computational cost.  While some heuristic approaches have employed strict rules or adjusted group weights, they still lack the capacity to effectively address the challenge.

%\xujun{Though~\citet{wang2023uncertainty}
%proposes a heuristic approach to establishing strict rules or adjusting the group weights, they still lack the capacity to effectively tackle the challenge. Really need to mention this paper?}

In this paper, we introduce a novel model named FairSync, which can ensure the minimum amortized group exposure requirements in the retrieval stage of RS.
FairSync converts the problem into a constrained distributed optimization and addresses the problem by transferring it to the dual space. In such space, we aggregate past fairness information into a vector and distribute it to servers. Based on the vector, each server independently conducts the item retrieval. Theoretical analysis shows that, even with local and distributive search, FairSync can still achieve global fairness.

%从constrainted到 unconstrained 并不能解释为啥stage 1的分布式item能够被解决。上面这句话应该能够解释为什么我们可以在不看到所有item内容的情况下仍然实现公平。
%应该解释我们说之前方法不能用于stage-1的理由如下，但是我们提出的方法却能解决下面问题（traversing millions of distributed-stored items is infeasible）的关键一招是什么。
%directly applying existing approaches designed for stage-2~\cite{fairrec, xu2023p, cpfair, wu2021tfrom, nips21welf} to stage-1 is not practical because these approaches ensure amortized group exposures by traversing all items and adjusting exposures through aggregating information. During stage-1, however, traversing millions of distributed-stored items is infeasible due to its substantial computational cost. 

%In the unconstrained problem,
%Under such transformation, FairSync can retrieve desired item groups using the k-nearest neighbor (KNN) search within milliseconds under within the most common used dense retrieval~\cite{zhan2021optimizing} architecture of stage-1~\cite{youtubeDNN, Comirec, mind, guo2022semantic, zhang2022re4, CMI}. 

% while maintaining the ability to complete the task within milliseconds. FairSync excels in addressing the controllable, distributed, efficient, and online requirements within the most common used dense retrieval~\cite{zhan2021optimizing} architecture of stage-1 in RS~\cite{youtubeDNN, Comirec, mind, guo2022semantic, zhang2022re4, CMI}. 

%Specifically, we re-write the amortized group max-min optimization problem in a dual space efficiently
%and distribute the solved dual variables to each retrieval server. 

In particular, the RS platform first sets a target to ensure that every group attains a minimum level of exposure. Then we approach the problem by formulating it as a distributed resource allocation problem~\cite{xu2023p, balseiro2021regularized}, with the constraint of required exposures. 
Subsequently, we can transform the constrained optimization problem into an unconstrained dual problem. Then, a constructed dual vector, storing the past fairness information, is combined with user embeddings to form a query vector. Each item, along with its embeddings, is concatenated with the group embedding to form the new item embeddings which will be distributed across servers. 
After that, each server conducts KNN search for identifying candidate items within milliseconds by using the dense retrieval~\cite{zhan2021optimizing} architecture. Finally, the outcomes are aggregated into a set of candidate items for stage-2.
As for the learning procedure, we employ the gradient descent method~\cite{kingma2014adam} to update the parameters of dual vector periodically to trade-off the efficiency and effectiveness.

%a more efficient representation, where the dual variable can be interpreted as a dense vector responsible for storing the already allocated information for each group. Following this transformation, we combine the dual variable with user embeddings to form a query, which is then distributed to every server. Finally, each server conducts item retrievals based on the query vector, and the outcomes are aggregated to generate the ultimate set of candidates for the second stage.

We summarize the major contributions of this paper as follows:

(1) We emphasize the critical importance of incorporating the assurance of minimum exposure for specific groups into the distributed stage-1 (retrieval) of RS.

(2) We introduce a model named FairSync which is tailored to meet the distributed, efficient, and online demands of the prevalent dense retrieval architecture in stage-1 of RS.

(3) The experimental results on two publicly available large-scale recommendation datasets clearly demonstrate that FairSync outperforms the baseline models, attaining the desired minimum level of exposures while preserving a high level of retrieval accuracy.

\section{Related Work}

Fairness has emerged as a prominent research theme within recommender systems. In this realm, two predominant aspects are often explored: individual fairness~\cite{marras2022equality, li2021towards}, which concentrates on equitable treatment for individuals, and group fairness, which categorizes items into various groups such as providers~\cite{xu2023p, xu2023ltp, fairrec, fairrecplus, cpfair,wu2021tfrom}, and item categories~\cite{ge2021towards, wang2023uncertainty, sacharidis2019top}. In group fairness, there are usually two criteria. One is egalitarian proposes~\cite{cpfair, ge2021towards, wang2023uncertainty, marras2022equality, wu2021tfrom}, which aims to equalize the outcome of different groups, another is Rawl's principle~\cite{lamont2017distributive}, which aims to improve the utility of worst-off groups~\cite{xu2023p, xu2023ltp, fairrec, fairrecplus}. In real application of RS, amortized fairness~\cite{biega2018equity, xu2023p, xu2023ltp, fairrec,fairrecplus} is more realistic, which achieves fairness over a period of time, rather than enforcing it strictly on a single ranking list.
In our research, we mainly focused on the amortized group max-min fairness, which is used to support new providers or enhance the visibility of specific item categories.

In RS, there are many methods proposed to alleviate amortized group MMF. FairRec~\cite{fairrec} and its extension FairRec+~\cite{fairrecplus} proposed an offline recommender model to guarantee equal frequency for all items in a series of ranking lists. ~\citet{yang2021maximizing} proposed a marginal optimizing approach to conduct amortized MMF in the learning-to-rank process. TFROM~\cite{wu2021tfrom} and CP-Fair~\cite{cpfair} proposed a Linear Programming (LP)-based method to ensure the group fairness, see also~\cite{nips21welf, yang2022effective, ben2023learning}. P-MMF~\cite{xu2023p}, LTP-MMF~\cite{xu2023ltp} proposed an online mirror gradient descent to improve worst-off provider's exposures in the dual space. Nonetheless, all of these proposals have been introduced within the context of stage-2 scenarios, making them impractical for application in stage-1 due to their substantial computational overhead.

In large RS, the significance of stage-1 (retrieval) cannot be overstated, as the performance of stage-2 is heavily reliant on it~\cite{CMI,zhang2022re4, Comirec, mind, wang2023uncertainty}. There are also some works that proposed inspiring approaches to solve fairness issues in stage-1. \citet{wang2023uncertainty} proposed an uncertainty quantification approach to control the threshold of each retrieval channel in one retrieval process. ~\citet{PO4Fairness, rastegarpanah2019fighting} proposed a fairness-related matrix factorization method to adjust the weight of the retrieval model. In resource allocation, ~\citet{cheung2020online, balseiro2021regularized} proposed a mirror-descent method to solve in the dual space. However, these methods either fall into addressing amortized group max-min fairness well or are unsuitable for implementation within the retrieval systems that require distributive, efficient, and online capabilities.

%In this paper, we formulate the re-ranking task as the resource allocation problem~\cite{bower1972managing}, which is crucial in communications and transportation.  
% In online resource allocation, most studies~\cite{devanur2011near,li2021online} focused on designing the time-separable reward functions, which is the sum of rewards over periods. ~\citet{li2020simple} proposed a local-based sub-gradient algorithm for the linear reward.~\citet{cheung2020online} designed a dual-based online algorithm with learning from the distribution of requests.~\citet{balseiro2021regularized} proposed a mirror-descent method to solve the fairness-regularized online allocation problem. 
\begin{figure*}[t]
    \centering
    \includegraphics[width=0.75\linewidth]{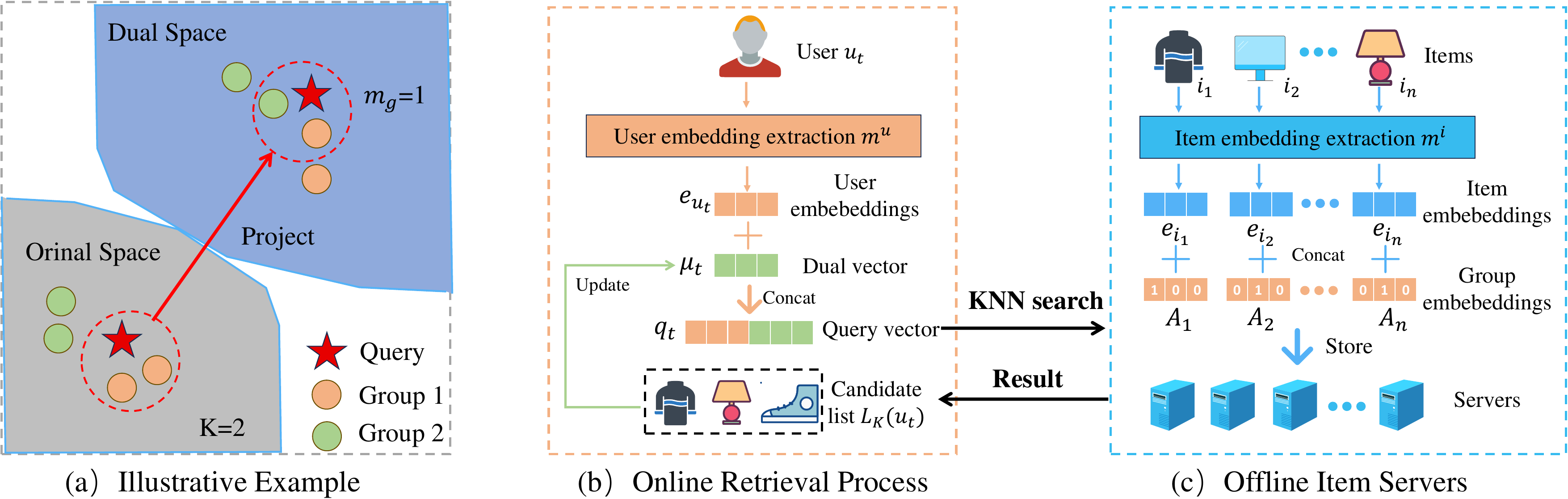}
    \caption{FairSync Framework. Sub-figure (a) illustrates an example to show the intuitive example of how FairSync works. Sub-figure (b) illustrates that 
    the online retrieval process when user $u_t$ arrives, while sub-figure (c) depicts the offline item embeddings in the dual space construction process.}
    \label{fig:framework}
\end{figure*}

\section{Problem Formulation}\label{sec:formulation}
%In this section, we will formally write the amortized group MMF retrieval process as a constrained optimizing problem.

In RS, let $\mathcal{U}, \mathcal{I}$ be the set of users and items, and each item $i\in \mathcal{I}$ is associated with a unique group $g\in \mathcal{G}$. 
The set of items associated with a specific group $g$ is denoted as $\mathcal{I}_g$. 
When a specific user $u \in \mathcal{U}$ accesses the retrieval system, the system will retrieve items from distributed servers and aggregate them into a list of candidate items with a predefined size of $K$, denoted by $L_K(u) \in \mathcal{I}^K$, which is then prepared for stage-2 for detailed ranking.

In real-world applications, the users arrive at the RS sequentially. Assume that at time $t$, user $u_t$ arrives. The RS aims to ensure that the exposure of a specific group $g$ remains at or exceeds a threshold of $m_g$ throughout the entire time horizon from $t=1$ to $T$, all the while optimizing to retain enough relevant items within a single candidate retrieval list. 
At the same time, we require an online solution, where at time step $t$, the RS responds to a request from user $u_t$
by providing a candidate list without waiting for input from a second user $u_{t+1}$. An online retrieval algorithm $h$ produces a real-time decision candidates $L_K(u_t)$ based on the current user $u_t$ and the previous history
$\mathcal{H}_{t-1} = \{u_s,L_K(u_s)\}_{s=1}^{t-1}$:
\[
    L_K(u_t) = h(u_t \mid \mathcal{H}_{t-1}, \mathcal{M}),
\]
where $\mathcal{M}=\{m_g| g\in\mathcal{G}\}$ is the factor set by the platform.

State-of-the-art recommender retrieval models~\cite{CMI,Comirec, mind, zhang2022re4} usually employ the distributive dense retrieval architecture, wherein an item $i$ is represented as an embedding $\bm{e_i}\in R^{d}$ using complex neural networks, such as transformers~\cite{vaswani2017attention}. These embeddings are indexed on each server $S_n, n\in [1,2,\cdots, M]$ in a distributed manner, with $d\in N^+$ being the predefined dimension and $M$ is the server number. For the user $u$, a simple network is employed to represent them as an embedding $\bm{e}_u\in R^{d}$, typically utilizing their historical browsing information in state-of-the-art systems. The user-item relevance score $r_{u,i}$ is calculated as the distance between $\bm{e}_i$ and $\bm{e}_u$ locally in each server $S_n$. The retrieval model's objective is to identify candidate items whose embeddings $\bm{e}_i$ are in close proximity to the embedding of the user  $\bm{e}_u$, i.e., finding the highest possible relevance scores $r_{u,i}$ in the candidate list for the stage-2 ranking process. 

Generally, the RS will establish the offline index~\cite{faiss} for items to efficiently search the desired ones from each server. Previous research~\cite{fairrec, xu2023p, nips21welf} focus on ensuring amortized group exposure $\mathcal{M}$ by traversing all items $i$ and their corresponding groups $g$, where $i\in \mathcal{I}_g$. However, in a distributed dense retrieval architecture, these methods are no longer suitable.

%we let $\mathbf{x}_t\in\{0, 1\}^{|\mathcal{I}|}$ is the decision vector for user $u_t$, specifically, for each item $i$, $\mathbf{x}_{ti} = 1$ if it is added to the retrieval list $L_K(u_t)$, otherwise, $\mathbf{x}_{ti} = 0$.

%\xujun{Need a paragraph to introduce the distributed retrieval stage in RS, which poses challenges to existing fair (exposure adjusting) approaches}

%Then, we will frame the retrieval process as a distributed and constrained resource allocation problem and subsequently formulate its dual problem. 
%To begin with, we will introduce the distributed dense retrieval architecture of stage-1. After that, we propose to tackle this problem in the dual space integrating it with the distributive dense retrieval architecture and employing the gradient descent~\cite{amari1993backpropagation} technique to efficiently approach a solution. 

\section{Our approach}
In this section, we will introduce our approach FairSync.

\subsection{Distributed Dense Retrieval Architecture}\label{sec:dense_retrieval}
In the mainstream recommender retrieval architectures, the primary objective is to identify items whose embeddings $\bm{e}_i$, are in close proximity to the embedding of the user $\bm{e}_u$ distributively.
Formally, the problem can write as:
\begin{equation}\label{eq:ori_retrieval}
    L_K(u) = \argmin_{L \subset \{1, 2, \ldots, |\mathcal{I}|\}, |L| = k} \sum_{i \in S_n, \forall n} d(\bm{e}_u, \bm{e}_i),
\end{equation}
where $L$ is the set of indices of the $K$ nearest neighbors, $d(\bm{e}_u, \bm{e}_i)$ is the distance between embedding $\bm{e}_u, \bm{e}_i$, $i$-th the commonly used distance metric being the dot-product locally on each server, i.e.
$
    d(\bm{e}_u, \bm{e}_i) = -e_u^{\top}\bm{e}_i,
$
and the $\bm{e}_u$ and $\bm{e}_i$ are calculated by a complex model, such as Deep Neural Network~\cite{youtubeDNN}, Recurrent Neural Network~\cite{gru4rec}, Capsule Network~\cite{mind}, i.e. 
\[
    \bm{e}_u = m^u(u), \quad \bm{e}_i = m^i(i),
\]
where $m^u(\cdot)$ and $ m^i(\cdot)$ are two embedding extraction networks.

Typically, the item embeddings $\bm{e}_i$ are pre-calculated and distributively indexed on servers~\cite{faiss}, whereas the user embedding $\bm{e}_u$ requires online inference using complex recommendation models, see~\cite{youtubeDNN,Comirec, mind, zhang2022re4}. In real application~\cite{Comirec}, Equation~\eqref{eq:ori_retrieval} is computed by performing KNN search in the embedding space efficiently.

\subsection{Dual Space of Retrieval}
After the platform gives the minimum exposure requirement for each group, i.e., requiring the exposure of a specific group $g$ to remain at or exceed a threshold of $m_g$ throughout the entire time horizon from $t=1$ to $T$. Therefore, we write the equation~\eqref{eq:ori_retrieval} as a distributed resource allocation problem:

\begin{equation}
\label{eq:retrieval}
\begin{aligned}
         \max_{x_{u_t,i}} \quad& \sum_{t=1}^T \sum_{i\in S_n, \forall n} x_{u_t,i} r_{u_t,i} \\
        \textrm{s.t.}\quad  
        &\sum_{i\in S_n, \forall n}x_{u_t,i} = K, \quad \forall t\in [1,2,\ldots,T]\\
        & r_{u_t,i} = -d(\bm{e}_{u_t}, \bm{e}_i)\\
         &e_g = \sum_{t=1}^T\sum_{i\in\mathcal{I}_g}x_{u_t,i}, 
         \quad \forall g\in \mathcal{G}\\
        &e_g \ge m_g, \quad \forall g\in \mathcal{G}\\
        & x_{u_t,i} \in \{0,1\}, \forall t\in [1,2,\ldots,T], i\in\mathcal{I}
\end{aligned}
\end{equation}
where $e_g$ can be seen as the total number of exposed items of group $g$, accumulated over the period $1$ to $T$, $x_{u_t,i}\in\{0, 1\}$ is the decision vector for user $u_t$. Specifically, for each item $i$, $x_{u_t,i} = 1$ if it is added to the candidate list $L_K(u_t)$, otherwise $x_{u_t,i} = 0$.

\begin{theorem}\label{theo:dual}
    The  dual problem objective $ W^{\text{Dual}}$ of Equation~\eqref{eq:retrieval} can be write as 
    \begin{equation}\label{eq:dual}
       \min_{\bm\mu}  \left[\sum_{t=1}^T\sum_{k=1}^K{( r_{u_t,i} - \bm{A}\bm{\mu})_{[k]}} + \sum_{g\in\mathcal{G}}m_g\bm{\mu}_g + \max_g\{\bm{\mu}_g\}(TK-\sum_{g\in\mathcal{G}}m_g)\right],
    \end{equation}
    where we can have a dual variable $\boldsymbol{\mu} \in \mathbb{R}^{|\mathcal{G}|}$,
    $\mathbf{A}\in\mathbb{R}^{|\mathcal{I}|\times|\mathcal{G}|}$ is the item-group adjacent matrix, and $A_{ig} = 1$ indicates item $i\in \mathcal{I}_g$, and 0 otherwise. 
    Moreover, the dual problem is a strong-dual problem, that is the optimal value of Equation~\eqref{eq:dual} is the same as Equation~\eqref{eq:retrieval}.
\end{theorem}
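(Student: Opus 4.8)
The plan is to derive Equation~\eqref{eq:dual} by Lagrangian duality, and the first step I would take is a reformulation that promotes the group exposures $e_g$ to genuine decision variables rather than mere abbreviations. Concretely, I would keep the per-user selection constraints $\sum_{i} x_{u_t,i} = K$ and the integrality $x_{u_t,i}\in\{0,1\}$ as the ``easy'' constraints, and I would also record the identity $\sum_{g\in\mathcal{G}} e_g = TK$ explicitly; this costs nothing, since it is already forced by summing $\sum_i x_{u_t,i}=K$ over $t$ and using that each item belongs to exactly one group. The linking equalities $e_g = \sum_{t=1}^T\sum_{i\in\mathcal{I}_g}x_{u_t,i}$ are then the ``hard'' constraints I would dualize, attaching a free multiplier $\bm\mu_g\in\mathbb{R}$ to each. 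They are free precisely because these are equalities, which is exactly why the theorem minimizes over $\bm\mu\in\mathbb{R}^{|\mathcal{G}|}$ with no sign restriction.

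Writing the Lagrangian and collecting terms, the coefficient of $x_{u_t,i}$ becomes $r_{u_t,i}-(\mathbf{A}\bm\mu)_i$ (the subtracted multiplier is that of the unique group containing $i$), while each $e_g$ collects the coefficient $\bm\mu_g$. The inner maximization then splits into two independent problems. The $x$-part decouples across users, and for each $u_t$ it is a top-$K$ selection under the adjusted scores $r_{u_t,i}-\mathbf{A}\bm\mu$, whose optimum is the order-statistic sum $\sum_{k=1}^K (r_{u_t,i}-\mathbf{A}\bm\mu)_{[k]}$; this yields the first term. The $e$-part is the linear program $\max_{e}\sum_g \bm\mu_g e_g$ over the polytope $\{e : e_g\ge m_g,\ \sum_g e_g = TK\}$. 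Substituting $e_g = m_g + f_g$ with $f_g\ge 0$ and $\sum_g f_g = TK-\sum_g m_g$, the optimum puts all of the surplus mass on the largest multiplier, giving $\sum_g m_g\bm\mu_g + (TK-\sum_g m_g)\max_g\{\bm\mu_g\}$, which are exactly the remaining two terms. Assembling the pieces and minimizing over $\bm\mu$ recovers $W^{\text{Dual}}$.

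For the strong-duality claim I would argue that there is no integrality gap, so that the optimum of the integer program~\eqref{eq:retrieval} coincides with that of its continuous relaxation, which in turn equals the Lagrangian dual. The relaxation here has the integrality property: the inner $x$-polytope (a product over users of ``choose $K$'' polytopes) and the inner $e$-polytope both have integral vertices when $K$, $TK$ and the $m_g$ are integers, so by standard results the dual value equals the LP-relaxation value. It then remains to show the relaxation itself is tight, i.e.\ that the LP optimum is attained at an integral $x$. I would establish this from the bipartite user--item / item--group structure of the constraint matrix via total unimodularity, or equivalently by taking an optimal $\bm\mu$, reading off per-user top-$K$ choices through complementary slackness, and verifying that they satisfy every group constraint.

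I expect this final point --- certifying tightness of the relaxation so that strong duality holds for the \emph{integer} program --- to be the main obstacle. The Lagrangian bookkeeping that produces the three terms is essentially mechanical once the $e_g$ are promoted to variables and the linking equalities are singled out for dualization; by contrast, ruling out an integrality gap genuinely requires exploiting the combinatorial structure of the two interleaved partitions of the variables (by user and by group) rather than generic convex-duality facts.
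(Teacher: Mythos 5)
Your Lagrangian bookkeeping coincides with the paper's own derivation (Appendix~\ref{app:dual_prove}): the paper likewise treats the $e_g$ as genuine variables, imposes $\sum_{g}e_g = TK$ explicitly (precisely to keep the inner maximization over $e$ bounded), dualizes the linking equalities with a free $\bm{\mu}\in\mathbb{R}^{|\mathcal{G}|}$, reduces the $x$-part to per-user top-$K$ sums via the ``choose $K$'' knapsack LP, and solves the $e$-part as the knapsack whose optimum is $\sum_{g}m_g\bm{\mu}_g + (TK-\sum_{g}m_g)\max_g\{\bm{\mu}_g\}$ --- exactly your surplus-mass argument. Where you genuinely diverge is the strong-duality claim. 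The paper handles it by a minimax swap: it proves a lemma that the top-$K$ sum $\sum_{k=1}^K \bm{x}_{[k]}$ is well-behaved (Lemma~\ref{lemma:topk}; note the lemma is stated as ``concavity'' but the inequality actually proved is convexity), asserts the Lagrangian is concave in the primal variables and convex in $\bm{\mu}$, and invokes Fan's minimax theorem to exchange $\max$ and $\min$. As written, that argument establishes equality between the dual and the \emph{continuous relaxation} of Equation~\eqref{eq:retrieval}; the LP-versus-integer-program gap is left implicit, since observing that each inner maximization is attained at integral points for fixed $\bm{\mu}$ is weaker than integrality of the joint relaxation. Your proposal to close exactly that gap by total unimodularity is sound and is arguably the cleanest way to finish: each variable $x_{u_t,i}$ has coefficient $1$ in exactly one user-cardinality row and exactly one group row, so the constraint matrix is the incidence matrix of a bipartite multigraph between users and groups, hence totally unimodular; with integral right-hand sides $K$ and $m_g$ (and primal feasibility, i.e.\ $\sum_g m_g \le TK$), the relaxation has integral optimal vertices, so the LP value equals the IP value and strong duality holds for the integer program itself. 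In short: identical decomposition and identical closed-form dual, but your treatment of strong duality is more complete than the paper's --- the paper's route is shorter, while yours supplies the combinatorial step the paper skips.
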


\begin{remark}[Distributed solution in dual space]
   After the transformation of the original problem (Equation~\eqref{eq:ori_retrieval}) into its dual form (Equation~\eqref{eq:ori_retrieval}), we can convert the problem into an unconstrained optimization problem, simplifying the optimization process significantly. We can also observe that different items are independent of each other. Therefore, the problem can be effectively solved in a distributed manner. The value $\bm{\mu}$ can be regarded as the accumulated exposure information and be distributed to each server during the retrieval process.
\end{remark}

\begin{remark}[Small computational costs]
   The original problem (Equation~\eqref{eq:ori_retrieval}) is computationally intensive due to its nature as a constraint integral linear problem, and it involves a vast variable space of size $T\times |\mathcal{I}|$ given that the retrieval process may encompass millions of items. However, in the dual problem~\eqref{eq:retrieval}, we observe that the variable size has been significantly reduced to \(|\mathcal{G}| \ll T \times |\mathcal{I}|\), and thanks to the sparsity of \(\mathbf{A}\), the computation of \(\mathbf{A}\bm{\mu}\) is highly efficient. This operation serves to project the variable \(\bm{\mu}\) from the group space into the item spaces.
\end{remark}

The detailed proof can be seen in the Appendix~\ref{app:dual_prove}.

%utilizing the Equation~\eqref{eq:dual_vector}
%utilizing the Equation~\eqref{eq:dual_vector}
\subsection{FairSync Algorithm}

\begin{algorithm}[t]
    \caption{FairSync Algorithm}
	\label{alg:FairSync}
	\begin{algorithmic}[1]
	\REQUIRE User arriving order $\{u_t\}_{t=1}^T$, item corpus $\mathcal{I}$, candidate size $K$, batch size $B$, optimizer Opt with learning rate $\eta$, trained user item embedding network $m^u(\cdot), m^i(\cdot)$ item-group adjacent matrix $\mathbf{A}$, minimum group exposure requirement $\{m_g\}_{g\in\mathcal{G}}$.
	\ENSURE The candidate lists for every user $\{L_K(u_t)\}_{t=1}^T$
        \STATE Calculate items embeddings $\{\bm{e}_i=m^i(i), \forall i\in \mathcal{I}\}$
        \STATE Re-construct and distributively index the item embeddings $\{\bm{h}_i=\bm{e}_i \| \bm{A}_i, \forall i\in \mathcal{I}\}$.
        \STATE Initialize update count $b=0$.
        \STATE Initialize the gradient buffer $\mathcal{B}_s= \{\}$.
	\FOR{$t=1,\cdots,T$}
	    \STATE Initialize dual solution $\boldsymbol{\mu} = 0$
    	    \STATE User $u_{t}$ arrives
                \STATE Calculate user embedding $\bm{e}_{u_t} = m^u(u_t)$
                 \STATE Re-construct query embeddings $\bm{q}_{u_t}= \bm{e}_{u_t} \| \bm{\mu}$
    	    \STATE  $// ~~ \texttt{KNN Retrieval:}$ (Equation~\eqref{eq:dual_distance})
    	    \STATE 
    	    $
    	         L_K(u_t) = \argmin_{S \subset \{1, 2, \ldots, |\mathcal{I}|\}, |S| = k} \sum_{i \in S} d^{\text{Dual}}(\bm{q}_{u_t}, \bm{h}_i),
    	    $
    	    %\STATE$// ~~ \texttt{Compute the sub-gradient:}$
    	    \STATE Compute the sub-gradient $\bm{s}$ utilizing the Equation~\eqref{eq:gradient}	
                \STATE Store the sub-gradient $\bm{s}$ into $\mathcal{B}_s$
                \STATE Update count $c = c + 1$
                \IF{$c = B$ $// ~~ \texttt{Update per B users} $} 
                \STATE $\bm{u}=\text{Opt}(\bm{\mu}, \sum_{\bm{s}\in\mathcal{B}_s}\bm{s})$
                 \STATE Initialize update count $b=0$.
                \STATE Initialize the gradient buffer $\mathcal{B}_s= \{\}$
                \ENDIF
	\ENDFOR
	
	\end{algorithmic}
\end{algorithm}

%In our algorithm, we employ the online-to-batch technique~\cite{van2022regret} to optimize the equation in Equation~\eqref{eq:dual} in a sub-gradient view, striking a balance between efficiency and effectiveness.

Figure~\ref{fig:framework} shows the framework of the FairSync. FairSync will retrieve items from the transformed dual space. Next, we will illustrate the inference phase and online learning phase, respectively.

\subsubsection{Inference phase}
we will provide illustrative examples to demonstrate how FairSync works and present a detailed inference workflow of FairSync.

Firstly, Figure~\ref{fig:framework} (a) provides an illustrative example to demonstrate the functioning of the FairSync algorithm in an intuitive manner. In this example, we simplify the problem by retrieving two items from a corpus containing four items (depicted as circles in the figure), each assigned to different groups, represented by distinct colors in the figure. A user $u_t$  arrives at the recommender system, and this user is represented as the red pentagram. The system's requirement is to ensure that there is at least one exposure of each group. From the original space, the user and items are represented as the embedding $\bm{e}_u, \bm{e}_i$ in Section~\ref{sec:dense_retrieval}. In such space, the retrieval system will find the top-2 closest items, i.e. orange items to users. However, in the dual space, FairSync will project the user and item embeddings to different points, while ensuring the minimum exposure constraint is satisfied. In the dual space, the dense retrieval system can efficiently locate the distributed items that meet the requirements and simultaneously maintain retrieval accuracy.

Formally, from Theorem~\ref{theo:dual}, we can observe that the distance between the user and item in the dual space transforms to:
\begin{equation}\label{eq:dual_distance}
    d^{\text{Dual}} = d(\bm{e}_u, \bm{e}_i) + \bm{\mu}_g,\quad i\in\mathcal{I}_g.
\end{equation}

Therefore, to better adapt to the dense retrieval architecture discussed in Section~\ref{sec:dense_retrieval}, we reconstruct user $u_t$ embeddings to $\bm{q}_{u_t}$ and embedding of item $i$ to $\bm{h}_i$, where $\bm{q}_{u_t}$ and $\bm{h}_i$ are defined as follows:
\begin{equation}\label{eq:dual_vector}
    \bm{q}_{u_t} = \bm{e}_{u_t} \| -\bm{\mu}_t,\quad \bm{h}_i = \bm{e}_i \| \bm{A}_i,
\end{equation}
where $\|$ denotes the concat operator between two vectors and $\bm{A}_i$ denotes the $i$-th column vector of adjacent matrix in Theorem~\ref{theo:dual}. Therefore, we have
$
     d^{\text{Dual}} = -\bm{q}_{u_t}^{\top}\bm{h}_i.
$

%Then we can easily apply the k-nearest neighbors (KNN) search in the embedding dual space efficiently.

Figure~\ref{fig:framework} (b) illustrates the inference phase of FairSync in a more visualized way. Firstly, a user $u_t$  arrives, then the user embedding extraction $m^u$ module (any retrieval model) will extract the user embedding $\bm{e}_{u_t}$. Then, at time $t$, we have a dual vector $\bm{\mu}_t$ to form the query vector $\bm{q}_{u_t}$ (Equation~\eqref{eq:dual_vector}). Then we will utilize the vector $\bm{q}_{u_t}$ to utilize k-nearest neighbors (KNN) search on the distributively indexed item embeddings $\{\bm{h}_i,\forall i\in\mathcal{I}\}$ in the dual space to retrieve a corresponding list of candidate items (Figure~\ref{fig:framework} (c)).

\subsubsection{Online learning phase}
In the online learning phase, we aim to update the dual vector $\bm{\mu}_t$ once in a while. 

Specifically, we can see
 that the sub-gradient $\bm{s}\in\mathbb{R}^{|\mathcal{G}|}$, $\bm{s}\in \partial W^{\text{Dual}}/\partial \bm{\mu}_{t}$ in Equation~\eqref{eq:dual} at time $t$ satisfies:
\begin{equation}\label{eq:gradient}
    \bm{s}_g = \begin{cases}
  m_g+\sum_{i\in L_K(u_t)}I(i\in\mathcal{I}_g), & \text{if}~ g \neq \hat{g} \\
  \sum_{i\in L_K(u_t)}I(i\in\mathcal{I}_g) + (TK-\sum_{g\neq \hat{g}}m_g), & \text{else},\\
\end{cases}
\end{equation}
where $I(\cdot)$ denotes the indicator function and 
$
    \hat{g} = \argmax_{g\in\mathcal{G}} \bm{\mu}_g.
$

Based on the assumption that user comes to the system randomly~\cite{xu2023p}, we can utilize the sub-gradient $\bm{s}$ to update $\bm{\mu}_t$.
In real applications, however, updating $\bm{\mu}_t$ at every time step $t$ is challenged by a large number of asynchronous update operations on different servers, and when the update frequency is too high, it can lead to excessively long recall times, thereby impacting the user experience. Therefore, to trade off the efficiency and effectiveness, we will update the dual vector $\bm{\mu}$ each $B$ steps.

Specifically, we will store the sub-gradient $\bm{s}$ of each step into a gradient buffer $\mathcal{B}_s$. For each $B$ steps, we will utilized any optimizer $\text{Opt}$ (in this paper, we utilized the well-performing Adam~\cite{kingma2014adam}) to update $\bm{\mu}$ utilizing the averaged gradient in the buffer, i.e.
$
    \bm{\mu} = \text{Opt}(\bm{\mu}, \sum_{\bm{s}\in\mathcal{B}_s}\bm{s}).
$

The detailed FairSync algorithm is shown in Algorithm~\ref{alg:FairSync}.

\subsubsection{Discussion}
Our algorithm FairSync can be applied with any other distributed retrieval architecture. We will illustrate the methods in detail. In retrieval architectures, most of the work commonly incorporates ANN algorithms, as elucidated in the paper. There are also tree-based indexing models such as TDM~\cite{zhu2018learning} and similar indexing approaches. As we solely map the user embedding and item embedding to the dual space outlined in Equation~(\ref{eq:dual_vector}), our approach allows for adapting with various distributed retrieval architectures. This is possible as other methods can easily apply our approach by indexing the dual item embedding using their preferred indexing techniques.

%our algorithm details and visualization figure are shown in Algorithm~\ref{alg:FairSync} and Figure~\ref{fig:xxx}.

\begin{table}[t]
\small
\caption{Statistics of the datasets. }\label{tab:dataset}
\centering
\begin{tabular}{lrrrr}
\hline
Dataset              & \#User & \#Item & \#Group &  \#Interaction \\
\hline
\hline
%\hline
Amazon-Book                 & 459,133  & 313,966  & 165 & 8,898,041  \\
Taobao        & 976,779   & 1,708,530   & 1246  & 85,384,110  \\
\hline
%\vspace{-0.8cm}
\end{tabular}
\end{table}

% \caption{Performance comparisons between ours and the baselines on Amazon book subset and Taobao. Our objective is to guarantee that each provider possesses a minimum of 200 items to fulfill the ESP metric. }

\begin{table*}[t]
\setlength{\tabcolsep}{4.5pt}
\SMALL
\caption{Performance comparisons between ours and the baselines on Amazon book subset and Taobao. Our objective is to guarantee that each group possesses a minimum of 200 exposures to fulfill the ESP metric. The $*$ means the improvements over the baseline that can guarantee minimum exposure baselines (K-neighbor and Uncalibrated) are statistically significant (t-tests and $p$-value $< 0.05$). The bold number indicates that the accuracy value exceeds that of all the baselines. All the numbers in
the table are percentage numbers with ``\%'' omitted.}\label{exp:main}
\centering
\begin{tabular}{|c|c|cccc|cccc|cccc|cccc|}
\hline
\multicolumn{1}{|c|}{\multirow{3}{*}{Base model}} & \multicolumn{1}{c|}{\multirow{3}{*}{Fairness model}} & \multicolumn{8}{c|}{Amazon-Book dataset} & \multicolumn{8}{c|}{Taobao dataset}   \\ \cline{3-18} 
\multicolumn{1}{|c|}{} & \multicolumn{1}{c|}{}  & \multicolumn{4}{c|}{top-20}  & \multicolumn{4}{c|}{top-50}  & \multicolumn{4}{c|}{top-20}  & \multicolumn{4}{c|}{top-50}  \\ \cline{3-18} 
\multicolumn{1}{|c|}{} & \multicolumn{1}{c|}{}  & \multicolumn{1}{c}{Recall} & \multicolumn{1}{c}{NDCG} & \multicolumn{1}{c}{HR} & \multicolumn{1}{c|}{ESP} & \multicolumn{1}{c}{Recall} & \multicolumn{1}{c}{NDCG} & \multicolumn{1}{c}{HR} & \multicolumn{1}{c|}{ESP} & \multicolumn{1}{c}{Recall} & \multicolumn{1}{c}{NDCG} & \multicolumn{1}{c}{HR} & \multicolumn{1}{c|}{ESP} & \multicolumn{1}{c}{Recall} & \multicolumn{1}{c}{NDCG} & \multicolumn{1}{c}{HR} & \multicolumn{1}{c|}{ESP} \\ \hline
\multirow{5}{*}{youtubeDNN} & regularized-fair  & 4.52 & 4.61 & 10.13 & 53.94 & 7.11 & 5.64 & 15.55 & 81.21 & 3.29 & 14.85 & 28.89 & 58.27 & 4.97 & 16.56 & 39.31 & 82.83 \\
 & IPW  & 4.55 & 4.64 & 10.19 & 45.45 & 7.16 & 5.68 & 15.66 & 73.94 & 3.29 & 14.85 & 28.89 & 57.78 & 4.97 & 16.56 & 39.31 & 82.66 \\
 \cline{2-18}
 & K-neighbor  & 0.09 & 0.14 & 0.29 & 100.00 & 0.14 & 0.17 & 0.41 & 100.00 & 0.15 & 0.87 & 1.73 & 100.00 & 0.24 & 1.00 & 2.51 & 100.00 \\
 & Uncalibrated  & 4.44 & 4.53 & 9.96 & 100.00 & 7.08 & 5.62 & 15.51 & 100.00 & 2.99 & 13.46 & 26.18 & 100.00 & 4.79 & 15.95 & 37.87 & 100.00 \\
 & \textbf{FairSync(ours)}  & \textbf{4.55}$^*$ & \textbf{4.64}$^*$ & \textbf{10.19}$^*$ & 100.00 & \textbf{7.16}$^*$ & \textbf{5.69}$^*$ & \textbf{15.68}$^*$ & 100.00 & \textbf{3.29}$^*$ & 14.77$^*$ & 28.74$^*$ & 100.00 & \textbf{4.99}$^*$ & \textbf{16.56}$^*$ & \textbf{39.32}$^*$ & 100.00 \\
  \hline
   \hline
\multirow{5}{*}{GRU4REC} & regularized-fair  & 3.95 & 4.01 & 8.70 & 46.67 & 6.35 & 4.94 & 13.63 & 77.58 & 4.73 & 18.84 & 35.63 & 64.93 & 6.95 & 20.43 & 45.99 & 83.63 \\
 & IPW  & 3.97 & 4.04 & 8.76 & 38.79 & 6.38 & 4.97 & 13.70 & 63.03 & 4.73 & 18.84 & 35.64 & 64.69 & 6.95 & 20.43 & 45.99 & 83.55 \\
  \cline{2-18}
 & K-neighbor  & 0.09 & 0.13 & 0.26 & 100.00 & 0.14 & 0.15 & 0.41 & 100.00 & 0.17 & 0.79 & 1.54 & 100.00 & 0.24 & 0.92 & 2.19 & 100.00 \\
 & Uncalibrated  & 3.90 & 3.94 & 8.58 & 100.00 & 6.32 & 4.91 & 13.55 & 100.00 & 4.29 & 17.08 & 32.26 & 100.00 & 6.69 & 19.65 & 44.25 & 100.00 \\
 & \textbf{FairSync(ours)}  & \textbf{3.98}$^*$ & \textbf{4.04}$^*$ & \textbf{8.77}$^*$ & 100.00 & 6.37$^*$ & \textbf{4.97}$^*$ & 13.68$^*$ & 100.00 & \textbf{4.74}$^*$ & 18.79$^*$ & 35.52$^*$ & 100.00 & \textbf{6.96}$^*$ & \textbf{20.54}$^*$ & \textbf{46.01}$^*$ & 100.00 \\
  \hline
  \hline
\multirow{5}{*}{MIND}  & regularized-fair  & 6.64 & 6.58 & 13.70 & 41.82 & 9.64 & 7.66 & 19.46 & 63.64 & 4.62 & 18.98 & 36.15 & 62.28 & 6.96 & 20.70 & 47.34 & 79.21 \\
 & IPW  & 6.62 & 6.56 & 13.67 & 38.18 & 9.63 & 7.63 & 19.42 & 58.79 & 4.62 & 18.98 & 36.15 & 62.28 & 6.96 & 20.70 & 47.34 & 78.97 \\
  \cline{2-18}
 & K-neighbor  & 0.10 & 0.16 & 0.32 & 100.00 & 0.15 & 0.18 & 0.40 & 100.00 & 0.17 & 0.94 & 1.80 & 100.00 & 0.26 & 1.12 & 2.60 & 100.00 \\
 & Uncalibrated  & 6.45 & 6.39 & 13.33 & 100.00 & 9.52 & 7.54 & 19.20 & 100.00 & 4.20 & 17.23 & 32.80 & 100.00 & 6.69 & 19.93 & 45.58 & 100.00 \\
 & \textbf{FairSync(ours)}  & 6.60$^*$ & \textbf{6.60}$^*$ & 13.65$^*$ & 100.00 & \textbf{9.65}$^*$ & \textbf{7.69}$^*$ & \textbf{19.48}$^*$ & 100.00 & 4.57$^*$ & 18.82$^*$ & 35.86$^*$ & 100.00 & \textbf{6.98}$^*$ & \textbf{20.76}$^*$ & \textbf{47.38}$^*$ & 100.00 \\
  \hline
   \hline
\multirow{5}{*}{ComiRec-DR} & regularized-fair  & 4.92 & 5.26 & 10.99 & 37.58 & 7.40 & 6.20 & 16.03 & 61.21 & 5.51 & 23.49 & 42.25 & 63.24 & 7.98 & 24.85 & 52.77 & 80.26 \\
 & IPW  & 4.91 & 5.24 & 10.97 & 33.33 & 7.41 & 6.18 & 16.03 & 55.15 & 5.51 & 23.49 & 42.25 & 63.24 & 7.98 & 24.85 & 52.76 & 80.26 \\
   \cline{2-18}
 & K-neighbor  & 0.09 & 0.14 & 0.25 & 100.00 & 0.14 & 0.16 & 0.37 & 100.00 & 0.19 & 1.01 & 1.85 & 100.00 & 0.28 & 1.18 & 2.60 & 100.00 \\
 & Uncalibrated  & 4.76 & 5.10 & 10.68 & 100.00 & 7.30 & 6.10 & 15.82 & 100.00 & 4.99 & 21.29 & 38.30 & 100.00 & 7.67 & 23.92 & 50.81 & 100.00 \\
 & \textbf{FairSync(ours)}  & \textbf{4.92}$^*$ & \textbf{5.28}$^*$ & \textbf{11.0}$^*$ & 100.00 & \textbf{7.42}$^*$ & \textbf{6.20}$^*$ & \textbf{16.08}$^*$ & 100.00 & 5.47$^*$ & 23.35$^*$ & 42.20$^*$ & 100.00 & \textbf{8.07}$^*$ & \textbf{24.93}$^*$ & \textbf{52.80}$^*$ & 100.00 \\
  \hline
   \hline
\multirow{5}{*}{ComiRec-SA} & regularized-fair  & 5.23 & 3.78 & 10.83 & 49.70 & 8.09 & 4.93 & 16.47 & 75.76 & 5.49 & 23.77 & 41.61 & 63.88 & 7.76 & 24.98 & 51.28 & 80.10 \\
 & IPW  & 5.25 & 3.79 & 10.85 & 44.85 & 8.10 & 4.93 & 16.46 & 70.91 & 5.49 & 23.77 & 41.62 & 63.80 & 7.76 & 24.99 & 51.28 & 80.10 \\
  \cline{2-18}
 & K-neighbor  & 0.11 & 0.14 & 0.29 & 100.00 & 0.15 & 0.75 & 1.92 & 100.00 & 0.17 & 0.90 & 1.61 & 100.00 & 0.25 & 1.10 & 2.39 & 100.00 \\
 & Uncalibrated  & 5.12 & 3.70 & 10.59 & 100.00 & 8.01 & 4.88 & 16.30 & 100.00 & 4.97 & 21.53 & 37.65 & 100.00 & 7.47 & 24.06 & 49.36 & 100.00 \\
 & \textbf{FairSync(ours)}  & \textbf{5.26}$^*$ & \textbf{3.80}$^*$ & 10.81$^*$ & 100.00 & \textbf{8.12}$^*$ & \textbf{4.93}$^*$ & \textbf{16.47}$^*$ & 100.00 & 5.45$^*$ & 23.66$^*$ & 41.36$^*$ & 100.00 & \textbf{7.76}$^*$ & \textbf{24.99}$^*$ & \textbf{51.33}$^*$ & 100.00 
 \\
 \hline
\end{tabular}
\end{table*}

\begin{figure*}[t]  
    \centering    
    \includegraphics[width=0.95\linewidth]{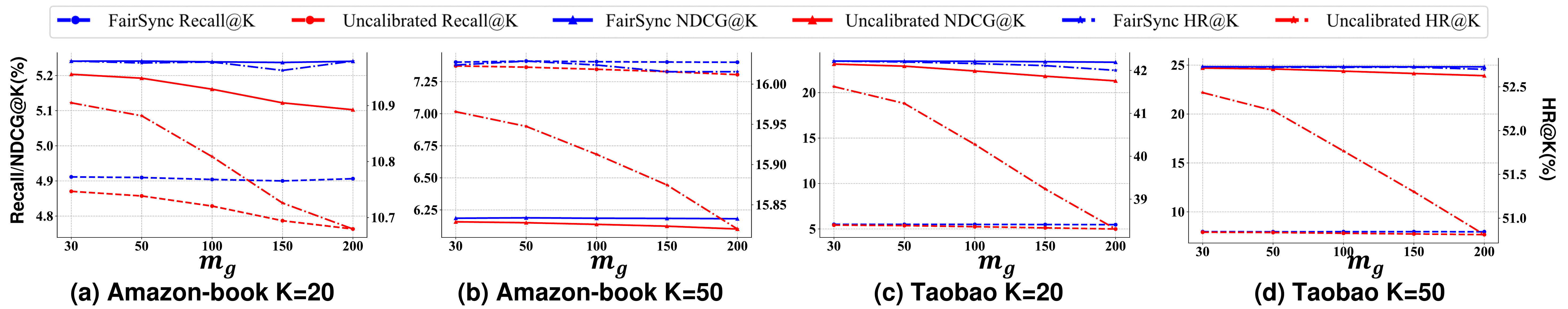}
    \caption{The accuracy curve (Recall, NDCG, and HR) of FairSync (ours) and the best baseline Uncalibrated under different minimum exposure threshold $m_g$. The experiments were conducted based on the best retrieval base model ComiRec-DR.}
    \label{fig:accuracy_curve}
\end{figure*}

\section{Experiment}\label{sec:exp}
We conducted experiments to demonstrate the effectiveness of the proposed FairSync.
The source code and experiments have been shared at github\footnote{\url{https://github.com/XuChen0427/FairSync}}.

\subsection{Experimental settings}
\subsubsection{Datasets}
Following the practice in ~\citet{Comirec}, the experiments are conducted on one commonly used publicly available retrieval datasets and one billion-scale industrial dataset, including:

\textbf{Amazon-Book}\footnote{\url{http://jmcauley.ucsd.edu/data/amazon/}}: The subsets (book domains) of Amazon~\cite{he2016ups} Product dataset. The item grouping relies on the field ``categories''. Each training sample is truncated at length 20. As a pre-processing step, we consider groups with fewer than 50 items as a single group, which we name the ``infrequent group''.

\textbf{Taobao}\footnote{\url{https://tianchi.aliyun.com/dataset/dataDetail?dataId=649&userId=1}}: collected about 1 million user behaviors data based on Taobao’s recommender systems~\cite{zhu2018learning} during November 25 to December 03, 2017. The item grouping relies on the field ``category ID''. Each training sample is truncated at length 50. As a pre-processing step, we consider groups with fewer than 200 items as a single group, which we name the ``infrequent group''.

The statistics of the two datasets are shown in Table~\ref{tab:dataset}.

\subsubsection{Evaluation}
Firstly, following the common practice~\cite{Comirec, CMI, mind}, we train the embedding extraction network $m^i(\cdot), m^u(\cdot)$. We sort all the interactions in the dataset based on their timestamps and utilize the initial 80\% of the interactions as the training data for $m^i(\cdot), m^u(\cdot)$ training. The remaining 20\% of interactions were split into two equal parts, with each 10\% portion serving as the validation and test data, respectively, for evaluation.

As for the evaluation metrics, the performances of the models were evaluated from two aspects: retrieval accuracy, and the minimum group exposure satisfaction (i.e. performance of fairness). Let $T$ be the test set length and $\mathcal{\hat{I}}_u$ be the set of items for user $u$.

For the  retrieval accuracy, following~\cite{Comirec, chen2018sequential}, we utilize
\begin{itemize}
    \item \textbf{Recall}: 
    \[
    \text{Recall@N}=\frac{1}{T}\sum_{t=1}^T\frac{|L_K(u_t) \cap \mathcal{\hat{I}}_{u_t} |}{\mathcal{\hat{I}}_{u_t}}.
    \]
    \item \textbf{Hit Rate.} The HR Rate (HR) is a metric that quantifies the percentage of recommended items that include at least one item that the user has previously interacted with~\cite{chen2018sequential, Comirec}.
    \[
    \text{HR@N} = \frac{1}{T}\sum_{t=1}^T I(|L_K(u_t) \cap \mathcal{\hat{I}}_{u_t} |>0).
    \]
    \item \textbf{Normalized Discounted Cumulative Gain.} Normalized Discounted Cumulative Gain (NDCG) is a metric that factors in the positions of correctly recommended items, providing a measure that accounts for the item's relevance and its position in the recommendation list~\cite{Comirec}.
     \[
    \text{NDCG@N} = \frac{1}{T}\sum_{t=1}^T \sum_{i\in L_K(u_t)} \frac{I(i\in  \hat{I}_{u_t})}{\log_2(\text{pos}(i, L_K(u_t)))}/Z_t,
    \]
    where $\text{pos}(i, L_K(u_t))$ is the sorting position of item $i$ in the list $ L_K(u_t)$, starting from 1 to $K$ and $Z_t$ represents a normalization constant that denotes the ideal discounted cumulative gain (IDCG@N), which signifies the highest achievable value for the numerator in the metric at time $t$.
\end{itemize}

For the minimum group exposure satisfaction, we apply:
\begin{itemize}
    \item \textbf{Enough Satisfaction Groups.} Enough satisfaction groups (ESP) aims to estimate whether each candidate generation policy selects enough items that satisfy the minimum group exposure requirement, similar to the enough relevant items (ER) metric in~\cite{wang2023uncertainty}:
    \[
        \text{ESP} = \frac{1}{|\mathcal{G}|} \sum_{g\in\mathcal{G}} I\left( \left[\sum_{t=1}^T \sum_{i\in L_K(u_t)} I(i\in\mathcal{I}_g)\right] > m_g\right).
    \]
\end{itemize}

\begin{figure*}[t]  
    \centering    
    \includegraphics[width=0.75\linewidth]{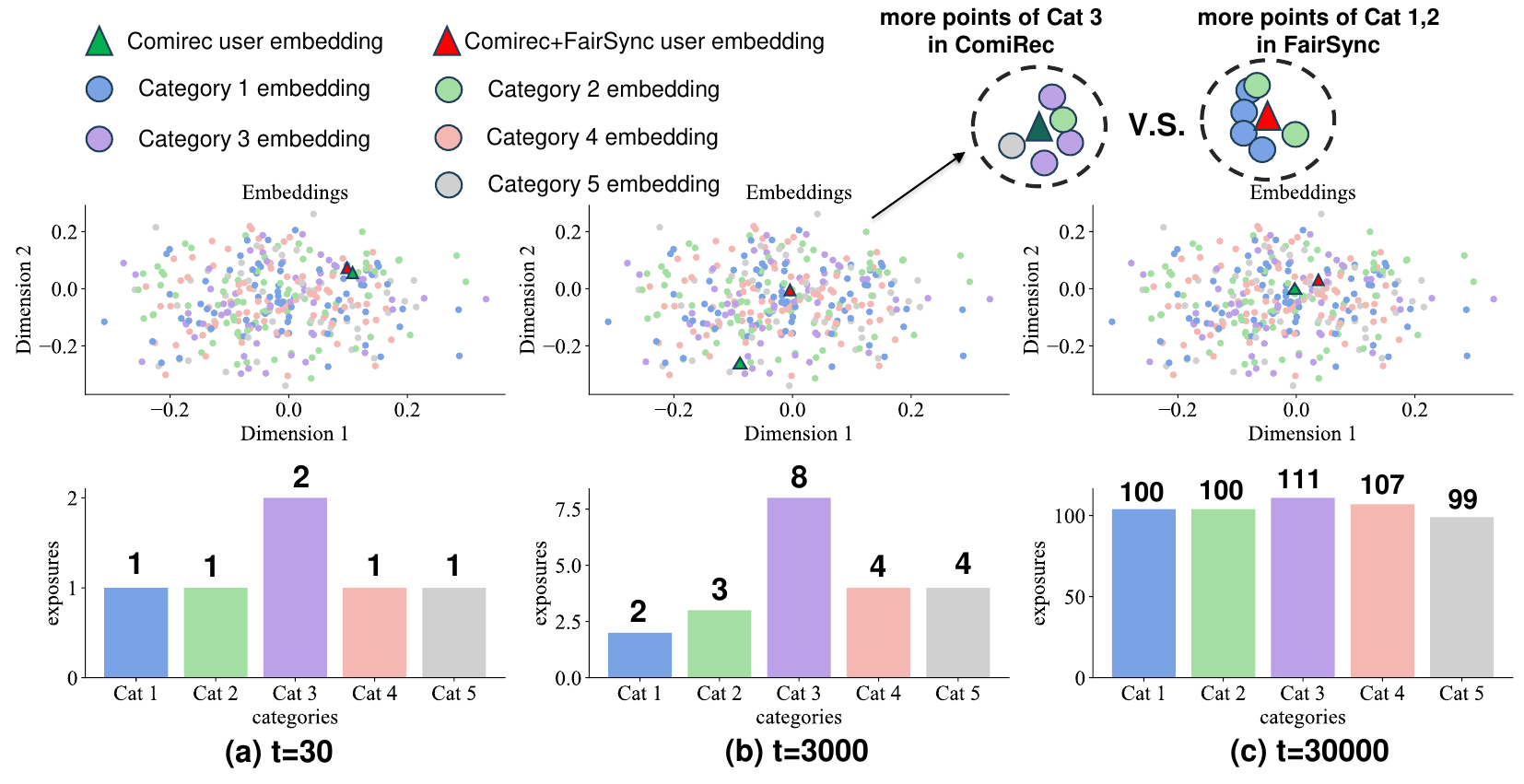}
    \caption{The three sub-figures in the first row illustrate the t-SNE visualization item embeddings and user embeddings of Comirec-DR and our model FairSync using Comirec-DR as the base model under different time $t$. The three sub-figures in the final row depict the category exposures under different time steps $t$. The experiment was conducted on the Amazon-book dataset with retrieval number $K=50$. }
    \label{fig:t-SNE}
\end{figure*}

\begin{figure}[t]  
    \centering    
    \subfigure[K=20]
    {
        \includegraphics[width=0.4\linewidth]{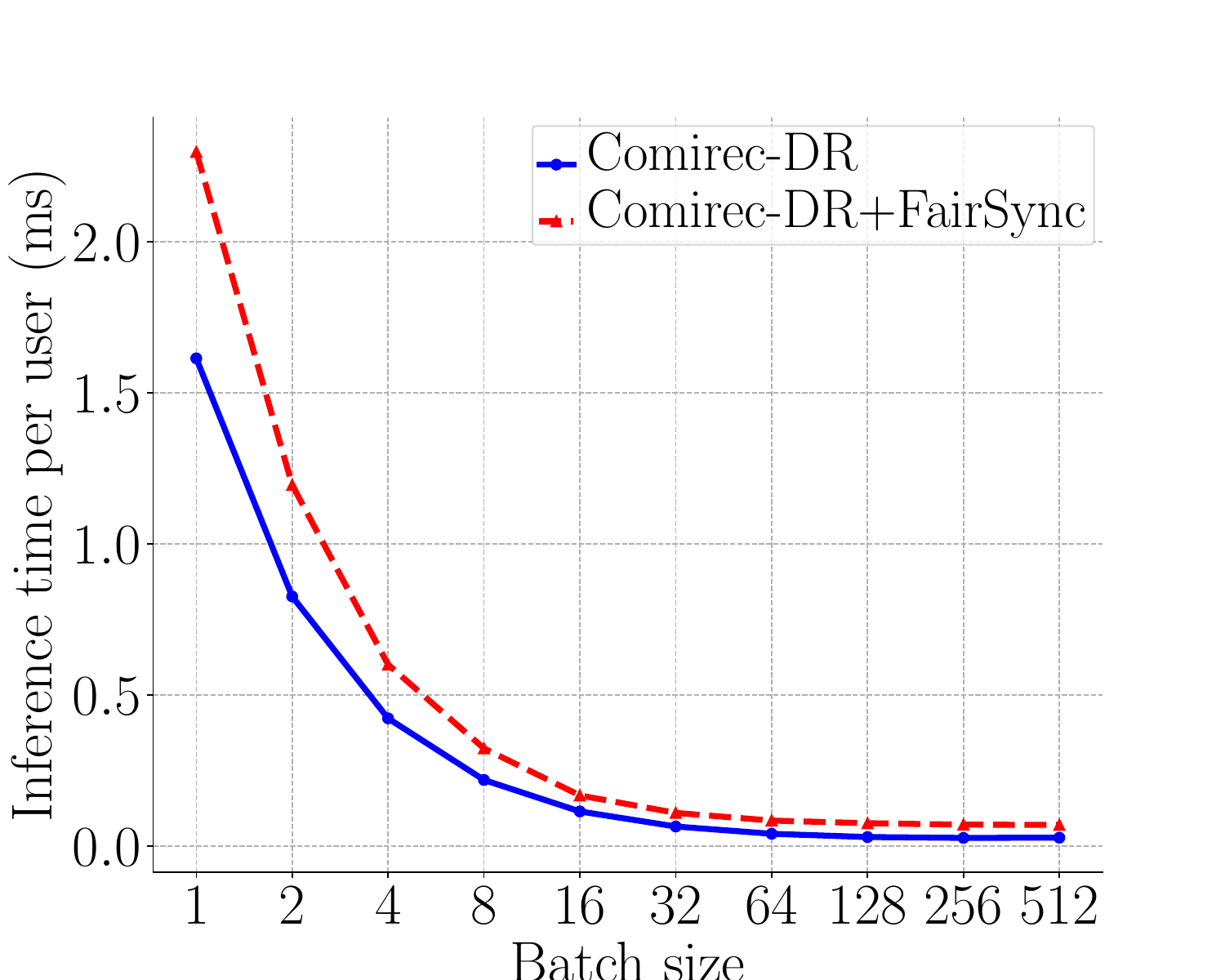}
    }
    \subfigure[K=50]
    {
        \includegraphics[width=0.4\linewidth]{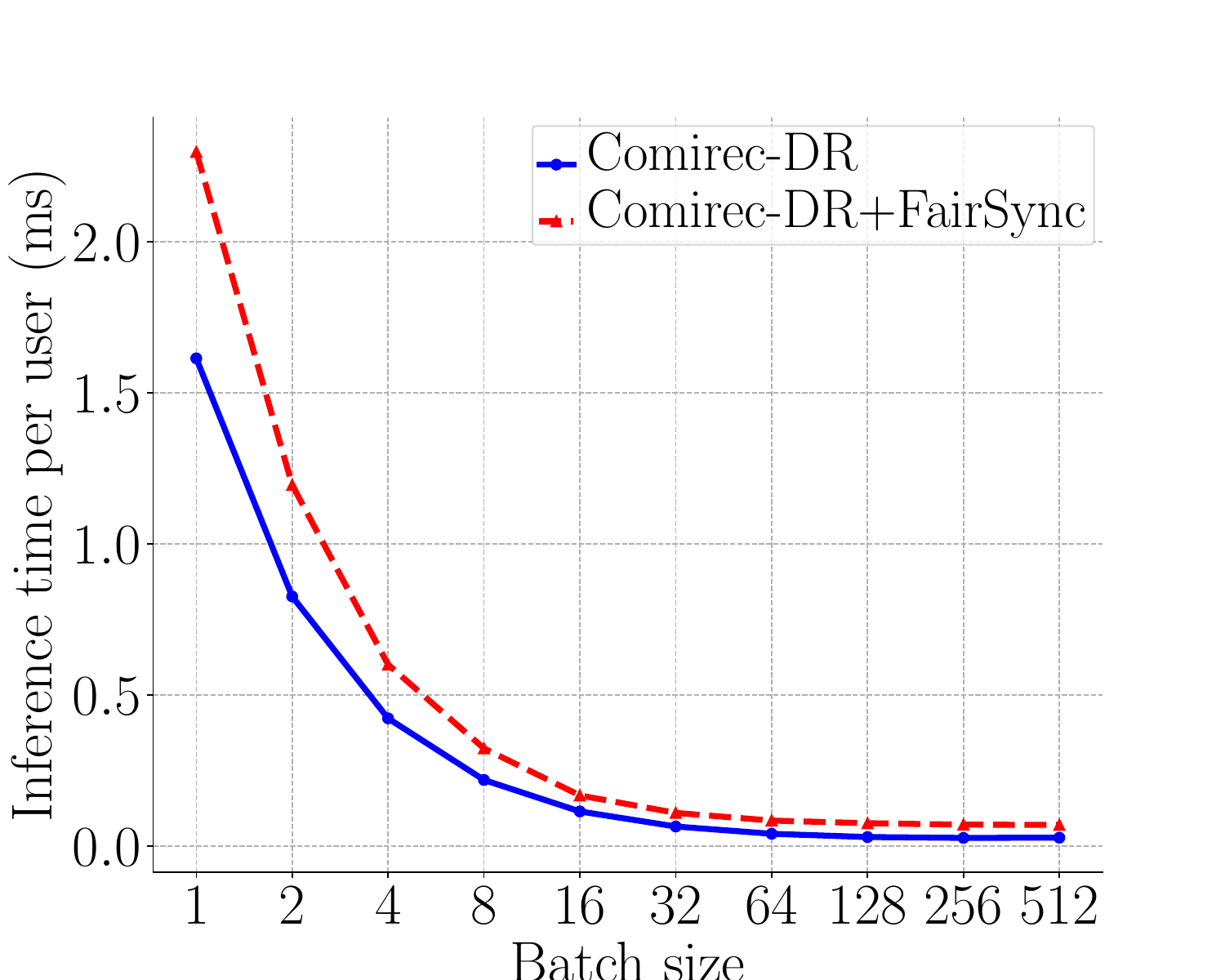}
    }
    \caption{Inference time per user w.r.t.  w.r.t. batch size $B$.}
    \label{fig:time_batchsize}
\end{figure}

\subsubsection{Baselines and Base Models}
In this section, we mainly introduce the controllable retrieval baselines and base models used for extraction user and item embeddings. 

For the distributed retrieval baseline, we mainly choose four heuristic methods: \textbf{regularized-fair}~\cite{xu2023p}: at each time step $t$, a regularized-based dual variable to reduce the exposure gaps between all items and the worst-off item. \textbf{IPW}~\cite{wang2023uncertainty}: selected the group exposure as the item's inverse propensity weighted (IPW) during the retrieval process. However, the two aforementioned baselines fail to ensure the necessary minimum exposures for groups, as employing different trade-off coefficients $\lambda$ reveals an inability to comprehensively fulfill the fairness constraint.

The next baselines are the two heuristic methods used to ensure that the required minimum exposures of groups are guaranteed in the retrieval process. \textbf{$K$-neighbor}~\cite{fairrec}: at each time step $t$, Only the items on each server associated with the top-K group, having the lowest cumulative exposure, are retrieved. \textbf{Uncalibrated}~\cite{wang2023uncertainty}: each step $t$ only chooses the items whose group does not satisfy the required exposures. For a fair comparison, we also retrieve them using the KNN search method.

For the retrieval base models, we utilize: \textbf{Youtube DNN}~\cite{youtubeDNN}: the most commonly used retrieval models in industrial recommender systems; \textbf{GRU4Rec}~\cite{gru4rec}: utilized the recurrent neural network (RNN) to model the user sequential behaviors in the retrieval process; \textbf{MIND}~\cite{mind}: aimed to model user's diverse interests by designing a multi-interest extractor layer based on the capsule routing mechanism~\cite{hahn2019self}; \textbf{ComiRec-SA}~\cite{Comirec}: the recent state-of-the-art retrieval models, which captured user diverse interests by the self-attention mechanism. \textbf{ComiRec-DR}~\cite{Comirec}: the variant of ComiRec-SA, which used the dynamic routing method to model user's sequential behaviors. 

\subsubsection{Implementation details}
As for the hyper-parameters in all models, the learning rate $\eta$ was tuned among $[1e-2,1e-4]$, and the batch size for updating dual vector $B$ was tuned among $[1, 512]$.
For training the base retrieval model, we utilize the best parameters reported in the original papers of the models.
We implement FairSync with the most common faiss~\cite{faiss} KNN-search package. The gradient descent package used Pytorch~\cite{pytorch} to apply the auto-gradient. The experiments were conducted under a server with a single NVIDIA GeForce RTX 3090.

\begin{figure}[t]  
    \centering    
    \subfigure[K=20]
    {
        \includegraphics[width=0.45\linewidth]{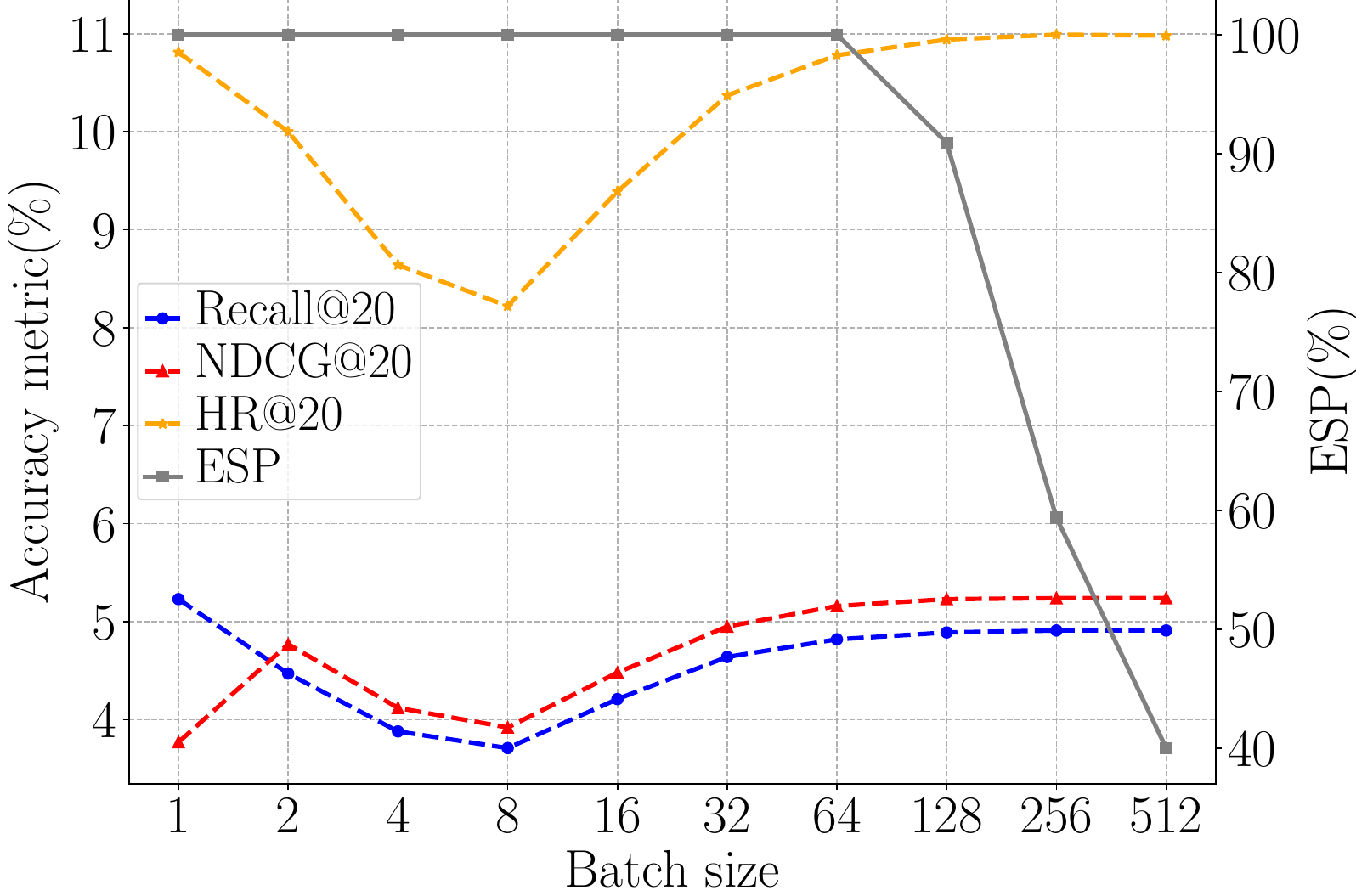}
    }
    \subfigure[K=50]
    {
        \includegraphics[width=0.45\linewidth]{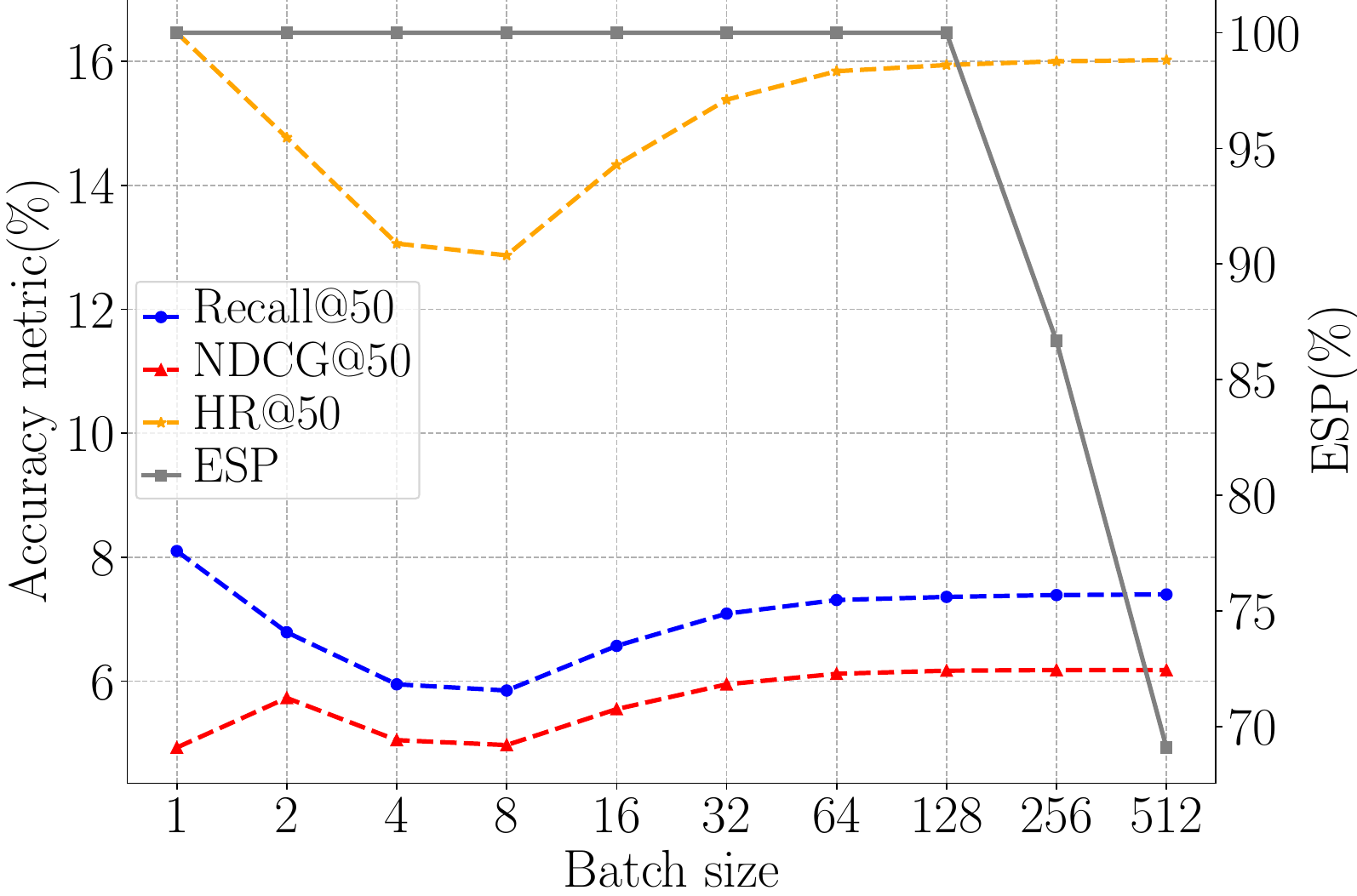}
    }
    \caption{Recall, NDCG, HR and ESP curves of FairSync under different top-K ranking w.r.t. batch size $B$.}
    \label{fig:acc_batchsize}
\end{figure}

\subsection{Experimental Results on Full Datasets}

%In this section, we conduct experiments using our model FairSync along with other baselines across all datasets to validate the effectiveness of FairSync. 

Firstly, we conduct experiments to show the performance of FairSync and other baselines under the same minimum exposure requirement ($m_g = 200, \forall g\in \mathcal{G}$) across all retrieval base models.
Table~\ref{exp:main} presents the experimental outcomes for our FairSync model and the baseline methods across all datasets, while ensuring that each group maintains a minimum of 200 exposures as a requirement.
To make fair comparisons, all the baselines were tuned their hyperparameters to obtain the best performance under our settings.

Based on the reported findings, it becomes evident that our model FairSync effectively fulfills the requirement of each group maintaining a minimum of 200 exposures (i.e., ESP=100\%). 
Furthermore, FairSync significantly outperforms the baseline techniques intended for guaranteeing minimum exposure (K-neighbor and Uncalibrated) across all datasets and various base retrieval models, encompassing different top-K retrieval numbers, as reflected in accuracy metrics including Recall, NDCG, and HR. Simultaneously, FairSync exhibits accuracy performance that is comparable with other fairness baselines (regularized-fair and IPW), even though these methods do not strictly ensure the required minimum exposure of groups. 
The experiments conclusively demonstrate that FairSync effectively guarantees the minimum exposure requirement without significantly compromising the accuracy of the retrieval process.

Secondly, we conduct experiments to demonstrate the performance of FairSync and the best baseline (Uncalibrated) under varying minimum exposure requirements under the best retrieval base model ComiRec-DR. Figure~\ref{fig:accuracy_curve} reports the accuracy (Recall, NDCG, and HR) curve of our model FairSync and the best baseline Uncalibrated under different minimum exposure threshold $m_g\in [10, 200], \forall g\in \mathcal{G}$. Both FairSync and Uncalibrated are able to satisfy the minimum exposure requirements. 

From the curves presented in Figure~\ref{fig:accuracy_curve} (a-d), it is evident that our model FairSync consistently outperforms Uncalibrated with a large margin across various accuracy metrics, datasets, and retrieval numbers ($K=20, 50$). The experiment demonstrates that our model FairSync consistently exhibits better accuracy when ensuring the minimum exposure requirements of different groups.

\subsection{Experiment analysis}
We also conduct experiments to analyze FairSync on Amazon-book dataset under the best retreival base model ComiRec-DR. For other analysis, please see Appendix~\ref{app:exp}.

%\textbf{BPR}~\cite{BPR} which optimizes the matrix factorization using a pairwise ranking loss;

\subsubsection{Visualization of embeddings under original and dual space. }\label{sec:ab4BM} 
In this section, we aim to visualize and illustrate the effective dual projection of FairSync (with ComiRec-DR as the base model, i.e. ComiRec+FairSync) by randomly sampling 5 item categories and setting the requirement $m_g=200, \forall g\in \mathcal{G}$.
Figure~\ref{fig:t-SNE} utilizes t-SNE~\cite{t-SNE} to visually represent user and item embeddings $\bm{e}_{u_t}$ and $\bm{e}_{i}$ in the original space (sub-figures in first rows), as well as user and item embeddings $\bm{q}_{u_t}$ and $\bm{h}_{i}$ in the dual space (sub-figures in second rows), across various time steps $t$. We also show the category exposures under different time steps $t$ (sub-figures in third rows).
Note that ComiRec-DR is a multi-interest retrieval model~\cite{mind, Comirec}, where we set four user embeddings generated to represent different user interests per time step $t$.

Figure~\ref{fig:t-SNE} (a) illustrates that at the initial retrieval process ($t=30$), the exposure levels for various categories (as depicted in the third column's bar plots) are nearly equalized. Such equalized exposure, in turn, leads to FairSync's reconstructed embeddings in the dual space (ComiRec-DR+FairSync embeddings) closely mirroring the patterns of the original embeddings (ComiRec-DR embeddings) to maintain retrieval accuracy.

Figure~\ref{fig:t-SNE} (b, c) illustrates the intermediary and ending stage ($t=3000, 30000$) of stage-1, during which category 3 dominates in exposure levels, whereas the other categories exhibit a lower level of exposure. In the original space (ComiRec-DR embeddings), it is evident that the user embeddings are closely aligned with the embeddings of category 3. However, in the dual space (ComiRec-DR+FairSync embeddings), the user embeddings are in closer proximity to lower categories (1,2), thereby ensuring that other categories meet the minimum exposure requirements.

The experiment clearly demonstrated that throughout the retrieval process, our model FairSync dynamically adjusts the user embedding's position based on category exposure, enhancing retrieval accuracy while maintaining the minimum exposure requirement.

\subsubsection{Ablation study on batch size}\label{sec:ab4batch}
In this section, we aim to conduct experiments to show the performance and inference time influenced by different online batch size $B$, since $B$ controls the dual vector $\bm{\mu}$'s updating frequency.
Figure~\ref{fig:time_batchsize} and Figure~\ref{fig:acc_batchsize} depict the variations in inference time and performance, respectively, with respect to the batch size $B\in [1,512]$.

Firstly, Figure~\ref{fig:time_batchsize} illustrates the online inference time per user w.r.t. batch size under different retrieval numbers $K$. From the displayed curve, it is evident that when the batch size is smaller ($B\leq 8$), FairSync still demands approximately $[0.2, 1]$ ms more time in comparison to the base model. When the batch size is relatively large ($B > 8$), the inference times of both FairSync and the base model are comparable, typically remaining below 0.25 ms. This satisfies the inference time requirements for industrial applications.

Secondly, Figure~\ref{fig:acc_batchsize} illustrates that accuracy (Recall, NDCG and HR) curve and ESP ($m_g=200$) curve w.r.t. batch size under different retrieval number $K$. Based on the depicted curve, it is apparent that the retrieval accuracy curve decreases as the batch size varies within the range $B\in [1,8]$, whereas for batch sizes within the range $B\in [8, 512]$, the accuracy curve exhibits an increase. It is also worth noting that the minimum exposure requirement is no longer satisfied as the batch size increases beyond $B > 64$. 

Therefore, we observe that the online batch size $B$ is a trade-off co-efficient for performance and inference time. In real-world applications, we must carefully control the online batch size $B$, as larger values can reduce inference time but may result in poorer performance, while smaller values can have the opposite effect.

\section{Conclusion}
This paper emphasizes the importance of considering amortized fairness in stage-1 of RS.
Then, we propose a novel retrieval model called FairSync that aims to maintain accuracy while ensuring the minimum exposure for specific groups in the distributed retrieval process. In FairSync, we transform the problem into a constrained distributed optimization problem and resolve the issue in the dual space of the problem in a distributed manner. Extensive experiments conducted on two large-scale datasets consistently showcased FairSync's superior performance over baseline models across various retrieval base models. Importantly, FairSync manages to maintain minimal computational costs in real-world applications.

\begin{acks}
This work was funded by the National Key R\&D Program of China (2023YFA1008704), the National Natural Science Foundation of China (No. 62376275), Beijing Key
Laboratory of Big Data Management and Analysis Methods, Major Innovation \& Planning Interdisciplinary Platform for the  ``Double-First Class” Initiative, funds for building world-class universities (disciplines) of Renmin University of China. Supported by the Fundamental Research Funds for the Central Universities, the Research Funds of Renmin University of China (22XNKJ07), and the Outstanding Innovative Talents Cultivation Funded Programs 2024 of Renmin University of China.
\end{acks}

%\newpage
\bibliographystyle{ACM-Reference-Format}
\balance
\bibliography{ref}
\newpage
\appendix
\section{Appendix}

\subsection{Lemma 1}
Firstly, we prove a lemma before we start the proof of Theorem~\ref{theo:dual}.
\begin{lemma}\label{lemma:topk}
    Let $\bm{a}_{[i]}$ denotes the $i$-th largest element of $\bm{a}$. Considering the function with the $\bm{x}\in\mathbb{R}^{N}$ as the input,$\text{Top-K}(\bm{x}) = \sum_{k=1}^K \bm{x}_{[k]}.$
    We demonstrate that the function $f(\bm{x})$ exhibits concavity w.r.t. $\bm{x}$.
\end{lemma}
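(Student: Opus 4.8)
The plan is to prove that $f(\bm{x}) = \text{Top-K}(\bm{x}) = \sum_{k=1}^{K}\bm{x}_{[k]}$ is concave by producing a representation of $f$ as a pointwise infimum of affine functions of $\bm{x}$, since any infimum of a family of affine functions is concave. First I would record the variational (linear-programming) description of the top-$K$ sum,
\begin{equation}
    f(\bm{x}) = \max_{\bm{z}}\Bigl\{\bm{z}^{\top}\bm{x} : \mathbf{1}^{\top}\bm{z}=K,\ \mathbf{0}\le\bm{z}\le\mathbf{1}\Bigr\},
\end{equation}
which holds because the feasible polytope has only $0/1$ vertices and the linear objective is maximized by placing unit weight on the $K$ largest coordinates of $\bm{x}$; the maximizers are exactly the indicator vectors of the top-$K$ index sets.

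Next I would dualize this linear program, attaching a scalar multiplier $s$ to the equality $\mathbf{1}^{\top}\bm{z}=K$ and a nonnegative multiplier $\bm{\nu}$ to $\bm{z}\le\mathbf{1}$, and invoke linear-programming strong duality to obtain the optimal-value representation $f(\bm{x})=\min_{s,\bm{\nu}\ge\mathbf{0}}\{sK+\mathbf{1}^{\top}\bm{\nu} : s\mathbf{1}+\bm{\nu}\ge\bm{x}\}$. Eliminating $\bm{\nu}$ to its smallest feasible value leaves $f(\bm{x})=\min_{s}\bigl[sK+\sum_{i}(\bm{x}_i-s)_{+}\bigr]$, an infimum over the single scalar $s$. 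The intent is to read the claimed concavity off this infimum form, exploiting that $\bm{x}$ enters each term only through a single-coordinate expression.

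The hard part — and the step I expect to be the real obstacle — is certifying that this infimum is an infimum of \emph{affine} functions of $\bm{x}$, as concavity strictly requires. Each inner term $sK+\sum_i(\bm{x}_i-s)_{+}$ is only piecewise linear in $\bm{x}$ rather than globally affine, and the companion descriptions of $f$ (the subset form $\max_{|S|=K}\sum_{i\in S}\bm{x}_i$ and the primal program above) present $f$ as a supremum of linear functions; so the delicate point is to organize the representation, and to treat the boundaries where the identity of the $K$ largest coordinates changes, so that it certifies curvature in the asserted direction. As a cross-check I would verify the defining inequality directly on a line segment $\lambda\bm{x}+(1-\lambda)\bm{y}$, testing it against the index set of the $K$ largest coordinates of the convex combination and tracking the sorting permutation carefully, since this is precisely where the direction of the inequality — and hence the curvature — is pinned down.
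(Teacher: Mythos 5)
The obstacle you flagged as ``the hard part'' is not an organizational difficulty --- it is fatal, because the statement as written is false: the top-$K$ sum is \emph{convex}, not concave. Your own first display already certifies this, since $f(\bm{x})=\max_{\bm{z}}\{\bm{z}^{\top}\bm{x}:\mathbf{1}^{\top}\bm{z}=K,\ \mathbf{0}\le\bm{z}\le\mathbf{1}\}$ exhibits $f$ as a pointwise supremum of linear functions of $\bm{x}$, hence convex; and $f$ is genuinely not concave for $1\le K<N$: take $N=2$, $K=1$, so $f(\bm{x})=\max(x_1,x_2)$, and with $\bm{x}=(1,0)$, $\bm{y}=(0,1)$ one gets $f\bigl(\tfrac12\bm{x}+\tfrac12\bm{y}\bigr)=\tfrac12<1=\tfrac12 f(\bm{x})+\tfrac12 f(\bm{y})$, violating the concavity inequality. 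Your dual representation offers no escape route: in $\min_{s}\bigl[sK+\sum_{i}(\bm{x}_i-s)_{+}\bigr]$ each inner term is convex (not affine) in $\bm{x}$, and since $sK+\sum_i(\bm{x}_i-s)_{+}$ is jointly convex in $(\bm{x},s)$, partial minimization over $s$ again yields a convex function. No reorganization of these representations can certify concavity, and your proposed cross-check on a segment, carried out honestly, would land on the $\le$ direction --- i.e., on convexity.

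What reconciles this with the paper: the lemma's wording is a sign/terminology error, and the paper's own one-line proof in fact proves \emph{convexity} --- its displayed inequality $\text{Top-K}(\lambda\bm{x}+(1-\lambda)\bm{y})\le\lambda\,\text{Top-K}(\bm{x})+(1-\lambda)\,\text{Top-K}(\bm{y})$ is precisely the definition of convexity. Convexity is also what the proof of Theorem~\ref{theo:dual} actually needs: it requires $\bm{\mu}\mapsto\sum_{k=1}^{K}(r_{u_t,i}-\bm{A}\bm{\mu})_{[k]}$ to be convex in $\bm{\mu}$ so that the minimax theorem applies, and that follows from convexity of Top-K composed with the affine map $\bm{\mu}\mapsto r-\bm{A}\bm{\mu}$. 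So the correct repair is not to keep hunting for an infimum-of-affine form but to restate Lemma~\ref{lemma:topk} with ``convexity,'' at which point your LP representation gives an immediate and arguably cleaner proof than the paper's: the paper's first inequality is asserted without justification, whereas it actually requires evaluating the convex combination on its own maximizing index set $S$ (giving $\lambda\sum_{i\in S}x_i+(1-\lambda)\sum_{i\in S}y_i$ and then bounding each sum by the respective Top-K), a step your supremum-of-linear-functions argument makes unnecessary.
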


\begin{proof}
    By the definition, for any $0 \leq \lambda \leq 1$ we have
    \[
    \begin{aligned}
         \text{Top-K}(\lambda\bm{x} + (1-\lambda)\bm{y}) &= \sum_{k=1}^K(\lambda\bm{x} + (1-\lambda)\bm{y})_{[k]}\\
        &\leq \lambda \sum_{k=1}^K \bm{x}_{[k]} + (1-\lambda) \sum_{k=1}^K \bm{y}_{[k]}\\
        &= \lambda  \text{Top-K}(\bm{x}) + (1-\lambda)\text{Top-K}(\bm{y}),
    \end{aligned}
    \]
    that is the sum of the first k elements of two vectors added together is less than the sum of the first k elements of the two vectors individually added.
\end{proof}

\subsection{Proof of Theorem~\ref{theo:dual}}\label{app:dual_prove}
\begin{proof}
    % Let $\bm{a}_{[i]}$ denotes the $i$-th largest element of $\bm{a}$. $\bm{A}_i$ denote the $i$-th column of the matrix $\bm{A}$. 

    We can utilize the Lagrangian condition~\cite{balseiro2021regularized, xu2023p} to decompose the relation between $e_g$ and $\sum_{t=1}^T\sum_{i\in\mathcal{I}_g}x_{u_t,i}$:
    \begin{align*}
             W\leq W^{Dual} = \max_{\bm{x}_{u_t,i}\in\mathcal{X}}\min_{\bm{\mu}} \quad& \sum_{t=1}^T\sum_{i=1}^{|\mathcal{I}|}\left[(r_{u_t,i}-\bm{A}_i^{\top}\bm{\mu})\bm{x}_{u_t,i} + \sum_{g\in\mathcal{G}}\bm{\mu}_g e_g\right]\\
            \textrm{s.t.}\quad 
            & e_g \ge m_g, \quad \forall g\in\mathcal{G} \\
             \quad &  \sum_{g\in\mathcal{G}} e_g = TK.
    \end{align*}
    Where $\mathcal{X}=\{\bm{x}_{u_t,i}|\bm{x}_{u_t,i} =  \{0,1\},\sum_{i\in\mathcal{I}}\bm{x}_{u_t,i} = K\}$ is the feasible region of $\bm{x}_{u_t,i}$ and the $\bm{\mu}\in\mathbb{R}^{|\mathcal{G}|}$ is the dual vector. Note that the condition $\sum_{g\in\mathcal{G}} e_g = TK$ should be satisfied before the dual transformation. This precaution is necessary because without this condition $e_g$ could easily diverge, therefore, we introduce an upper bound for $e_g$.

    Let's consider the following program:
    \begin{equation}
        \begin{aligned}
            \max_{\bm{x}} \quad& \bm{a}^{\top}\bm{x} \\
            \textrm{s.t.}\quad 
            & \bm{1}^T\bm{x} = K, 0\leq \bm{x} \leq \bm{1}, 
        \end{aligned}
    \end{equation}
    
    This problem is a well-studied knapsack problem~\cite{salkin1975knapsack}, whose optimal objective should be $\sum_{i=1}^K \bm{a}_{[i]}$. The equation tells us that only the top $K$ items that user $u_t$ have the highest preference for every group $p$ will be recommended for every user.
    
    Thus, we can easily observe that the objective $W$ of the target about $x_{t, i}$ is a top-K function in lemma~\ref{lemma:topk} and from lemma~\ref{lemma:topk}, we can observe that $W$ is concave with respect to $x$ and convex with respect to the variable $\bm{\mu}$. From the minimax theorem~\cite{fan1953minimax}, we can re-write the equation as:
     \begin{equation}
    \label{eq:DUAL}
    \begin{aligned}
             W= \min_{\bm{\mu}}\max_{e_g} \quad& \sum_{t=1}^T\sum_{k=1}^K\left[(r_{u_t,i}-\bm{A}_i^{\top}\bm{\mu})_{[k]} + \sum_{g\in\mathcal{G}}\bm{\mu}_g e_g\right]\\
             \textrm{s.t.}
             \quad &  e_g \ge m_g, \quad \forall g\in\mathcal{G}\\
             \quad &  \sum_{g\in\mathcal{G}} e_g = TK
    \end{aligned},
    \end{equation}
   % \end{equation}
    % where we replace the variable $x$ to $e$ utilizing the condition $e_g = \sum_{t=1}^T\sum_{i\in\mathcal{I}_g}x_{u_t,i}$.

    Now, consider the following problem:
    \[
    \begin{aligned}
           L = &\max_{\bm{e}} \quad \sum_{g\in\mathcal{G}}\bm{\mu}_ge_g\\
          \textrm{s.t.} \quad&
          \sum_{g\in\mathcal{G}}\bm{e}_g = TK, \bm{e}_g \ge m_g, \quad \forall g\in\mathcal{G},
    \end{aligned}
    \]
    which is a well-studied knapsack problem~\cite{salkin1975knapsack}, with the optimal solution 
    \[
    \sum_{g\in\mathcal{G}}m_g\bm{\mu}_g + \max_{g\in\mathcal{G}}\{\bm{\mu}_g\}(TK-\sum_{g\in\mathcal{G}}m_g).
    \]

    Finally, we can take the optimal solution into Equation~\eqref{eq:DUAL}, we get $W^{Dual}$ as
    \[
    min_{\mu}  \left[\sum_{t=1}^T\sum_{k=1}^K{( r_{u_t,i} - \bm{A}\bm{\mu})_{[k]}} +
    \sum_{g\in\mathcal{G}}m_g\bm{\mu}_g + \max_g\{\bm{\mu}_g\}(TK-\sum_{g\in\mathcal{G}}m_g)\right].
    \]
    
\end{proof}

\section{Regularized-fair Algorithm}\label{app:alg}

In this section, we propose a heuristic method for distributed approach for improving the worst-off group exposures in retrieval process, aligning with the concept amortized max-min fairness~\cite{nips21welf, xu2023p}, named regularized-fair. Similar with the dual form of FairSync, it introduced a dual variable $\bm{\mu}_r$ that measures the exposure gaps between the target group and the worst-groups. The detailed algorithm is shown in Algorithm~\ref{alg:regularized}. 

%At each time step $t$, regularized-fair aim to improve the exposures of worst-off group, aligning with the concept amortized max-min fairness~\cite{nips21welf, xu2023p}. 

\begin{algorithm}[t]
    \caption{regularized-fair Algorithm}
	\label{alg:regularized}
	\begin{algorithmic}[1]
	\REQUIRE User arriving order $\{u_t\}_{t=1}^T$, item corpus $\mathcal{I}$, candidate size $K$, batch size $B$, trained user item embedding network $m^u(\cdot), m^i(\cdot)$ item-group adjacent matrix $\mathbf{A}$, trade-off coefficient $\lambda$.
	\ENSURE The candidate lists for every user $\{L_K(u_t)\}_{t=1}^T$
        \STATE Calculate items embeddings $\{\bm{e}_i=m^i(i), \forall i\in \mathcal{I}\}$
        \STATE Re-construct and distributively index the item embeddings $\{\bm{h}_i=\bm{e}_i \| \bm{A}_i, \forall i\in \mathcal{I}\}$ utilizing the Equation~\eqref{eq:dual_vector}.
        \FOR{$t=1,\cdots,T$}
    	    \STATE User $u_{t}$ arrives
                \STATE Calculate user embedding $\bm{e}_{u_t} = m^u(u_t)$
                \STATE Compute $\bm{\mu}_r = \lambda [\mathbf{A}( e - (\min_{g\in\mathcal{G}}e_g \bm{1}^{\top}) ]$
                 \STATE Re-construct query embeddings $\bm{q}_{u_t}= \bm{e}_{u_t} \| -\bm{\mu}_r$.
    	    \STATE  $// ~~ \texttt{KNN Retrieval:}$ (Equation~\eqref{eq:dual_distance})
    	    \STATE 
    	    $
    	         L_K(u_t) = \argmin_{S \subset \{1, 2, \ldots, |\mathcal{I}|\}, |S| = k} \sum_{i \in S} d^{\text{Dual}}(\bm{q}_{u_t}, \bm{h}_i),
    	    $
         \ENDFOR
	\end{algorithmic}
\end{algorithm}

\section{Additional Experimental Analysis}\label{app:exp}

\begin{table}[t]
\caption{The performance of FairSync and ComiRec-DR under an extreme case. All the numbers in the
table are percentage numbers with \% omitted. }\label{tab:extreme}
\centering
\begin{tabular}{lcc}
\hline
Model   & Recall & ESP \\
\hline
\hline
%\hline
ComiRec-DR                 & 1.0  & 50.0  \\
\textbf{ComiRec-DR+FairSync(ours)}  & 0.96 & 100.0    \\
\hline
\end{tabular}
\end{table}

\begin{table}[t]
\caption{The performance of FairSync and other baseline performances when $m_g$ is randomly sampled from [1,200] for every $g\in\mathcal{G}$ under $K=20$. All the numbers in the
table are percentage numbers with \% omitted.}\label{tab:different_m}
\centering
\begin{tabular}{lcccc}
\hline
Model  & Recall & NDCG & HR & ESP \\
\hline
\hline
%\hline
K-neighbor & 0.09 & 0.14 & 0.25 & 100.0\\
Uncalibrated & 4.82 & 5.15 & 10.8  & 100.0  \\
\textbf{FairSync(ours)} & \textbf{4.91} & \textbf{5.24} & \textbf{11.0} & 100.0    \\
\hline
\end{tabular}
\end{table}

\subsection{FairSync performance under extreme scenario}
In the context of amortized fairness in our settings, an extreme scenario might occur where there is a group for which all users express a unanimous dislike for the items associated with that group. So we conducted a toy experiment to show how our method FairSync will perform under such extreme cases. Let's assume there are two groups: $g_1$ and $g_2$, and each group has 5 items. Meanwhile, $10,000$ users are accessing the retrieval system, and each user is conducting retrievals for a set of $K=5$ items. Each user's embeddings will exhibit proximity (distance ranging from 0 to 0.4) to the item embeddings belonging to category $g_1$, while maintaining a distance (ranging from 0.4 to 1.0) from the item embeddings associated with category $g_2$
. Every group will require 2000 exposures ($m_1$=$m_2$=2000). Our experiment results are shown in Table~\ref{tab:extreme}.

From the experiments, we can see the recall will be 0.96 (decrease $4\%=\frac{m_2}{KT}$ compared to methods without fairness constraints), and the fairness constraints will also be satisfiable ($g_1$ will get 48000 exposures while $g_2$ will get 2000 exposures). Certainly, incorporating constraints incurs a marginal reduction in recall without adversely impacting latency. However, as detailed in the subsequent question, our method supports query-level constraints. This enables the platform to strategically decrease the requirement $m_2$ on $g_2$ to achieve higher accuracy, presenting a trade-off decision for the platform.

\subsection{Performance on group-level requirements.} In our main experiments, we have chosen to maintain a constant value for $m_g$ across all groups $g$. We can make different user groups to select different $m_g$ to achieve the group-level constraints. We conduct an experiment to randomly sample from [1,200] for every group $g$ to show our effectiveness for $K=20$.

From Table~\ref{tab:different_m}, the reported results indicate that our method, FairSync, continues to effectively address group-level constraints, demonstrating superior retrieval accuracy compared to other baselines such as K-neighbor and Uncalibrated.

\end{sloppy}
\end{document}